\definecolor{red}{RGB}{255,0,0}
\definecolor{blue}{rgb}{0.0, 0.4, 0.65}
\definecolor{orange}{RGB}{253,188,64}
\definecolor{green}{RGB}{154,205,50}
\providecommand{\keywords}[1] 
{
	\noindent \small	
	\textbf{\textit{Keywords---}} #1
}
\newcommand\N{\mathbb{N}}
\newcommand\I{{\mathcal{I}}}
\newtheorem{theorem}{Theorem}
\newtheorem{lemma}[theorem]{Lemma}
\newtheorem{proposition}[theorem]{Proposition}
\theoremstyle{definition}
\theoremstyle{remark}
\newtheorem{remark}{Remark}
\tikzset{
	every node/.style = {align=center},
	drawing/.style = {inner sep=0, outer sep=0},
	class/.style = {draw, minimum size=.7cm},
	server/.style = {draw, circle, minimum size=.85cm},
	token/.style = {draw, minimum size=.4cm},
	fcfs/.style={
		draw,
		rectangle split,
		rectangle split parts=#1,
		rectangle split horizontal,
		rectangle split empty part width=-.2cm,
		rectangle split empty part height=.2cm,
		inner ysep=.27cm,
	},
	ps/.style={
		rotate=90,
		draw,
		rectangle split,
		rectangle split parts=#1,
		rectangle split horizontal,
		rectangle split empty part width=-.17cm,
		rectangle split empty part height=.9cm,
	},
	serverps/.style = {draw, circle, minimum size=1cm},
}
\pgfplotsset{compat=1.8}
\pgfplotsset{
	table/col sep = {comma},
	defaultplotstyle/.style={
		every axis plot/.append style={
			thick,
		},
		xlabel near ticks, ylabel near ticks,
		ylabel style={align=center},
		xtick style={draw=none},
		ytick style={draw=none},
		grid=major,
		legend style={
			cells={anchor=west, align=left},
			at={(0.5, .98)},
			anchor=north,
			/tikz/every even column/.append style={column sep=0.2cm},
			font=\footnotesize,
		},
		legend cell align={left},
		width=.7\linewidth, height=.27\linewidth,
	},
	legendplotstyle/.style={defaultplotstyle,
		xmin=0, xmax=1,
		ymin=0, ymax=1,
		hide axis,
		legend columns=4,
		every axis plot/.append style={
			very thick,
		},
	},
}
\begin{document}

\title{Load Balancing in Heterogeneous Server Clusters: Insights~From~a~Product-Form~Queueing~Model\footnote{Author version of the paper available at \url{https://doi.org/10.1109/IWQOS52092.2021.9521355}.}}

\author{Mark van der Boor}
\author{C\'eline Comte}
\affil{Eindhoven University of Technology}

\maketitle

\begin{abstract}
	Efficiently exploiting servers in data centers requires performance analysis methods that account not only for the stochastic nature of demand but also for server heterogeneity. Although several recent works proved optimality results for heterogeneity-aware variants of classical load-balancing algorithms in the many-server regime, we still lack a fundamental understanding of the impact of heterogeneity on performance in finite-size systems. In this paper, we consider a load-balancing algorithm that leads to a product-form queueing model and can therefore be analyzed exactly even when the number of servers is finite. We develop new analytical methods that exploit its product-form stationary distribution to understand the joint impact of the speeds and buffer lengths of servers on performance. These analytical results are supported and complemented by numerical evaluations that cover a large variety of scenarios.
\end{abstract}

\keywords{Load balancing, performance analysis, product-form queueing model, Jackson network, insensitivity}

\section{Introduction} \label{sec:introduction}

Distributing a stochastic demand
across a set of heterogeneous servers
is not only a fundamental problem in queueing theory
but also an essential building block
of applications like
parallel computing
and production systems.
Besides static approaches
that do not require communication
between the dispatcher and the servers,
classical solutions include
join-the-shortest-queue,
power-of-$d$-choices~\cite{M01},
and join-idle-queue~\cite{LXKGLG11}.
These solutions were originally
designed for service systems
that are \textit{homogeneous}
in the sense that all servers
have the same speed~\cite{BBLM18}.
Although these solutions successfully cope
with the stochastic nature of demand,
they are not always suitable
when servers have unequal speeds~\cite{GJWD20}.

Several heterogeneity-aware approaches
were introduced to improve
the performance of these classical solutions,
for instance by using information on
the job sizes and server speeds
or by delaying the assignment decision.
For instance, join-the-shortest-workload
yields optimal performance
if the service times of all jobs
over all servers are known,
an assumption that is rarely satisfied in practice.
Redundancy scheduling achieves
the same performance gain~\cite{ABV18},
this time by delaying
the assignment decision,
which may again be practically infeasible if
the communication time
between the dispatcher and servers
is not negligible.
To achieve good performance
without these strong assumptions,
more recent works introduced speed-aware variants
of the above-mentioned well-known algorithms.
More specifically,
\cite{WZS20} introduced variants of
join-the-shortest-queue and join-idle-queue
where the server speeds are used
as a tie-breaking rule,
and proved that these variants
minimize the mean response time
in the many-server regime;
\cite{GJWD20} proposed variants of
power-of-$d$-choices and join-idle-queue
for service systems with two server types
(fast and slow)
by adapting the degree of diversity
and assignment probabilities
to the server speeds,
and proved stability,
again in the many-server regime.
Despite these advances,
we still lack a fundamental understanding of
the impact of heterogeneity on performance
in service systems with
a finite number of servers.

In this paper, we make one step
further into this direction
by considering a load-balancing
algorithm \cite{BJP04,C19-1,C19-2,JP18}
that leads to a product-form
queueing model
(assuming that jobs arrive
according to a Poisson process),
and can therefore be analyzed exactly
even when the number of servers is finite.
Assuming that each server
has a finite-length buffer,
this algorithm assigns each incoming job
to a server chosen at random,
with a probability proportional
to the number of available slots
in the server's buffer;
an incoming job that finds
the buffers of all servers full
is rejected and considered permanently lost.
Although this algorithm
does not account for the server speeds
in the online assignment decision, these speeds
can be used
offline to adjust the buffer lengths.
Besides its analytical tractability,
this algorithm has the advantage of
making performance insensitive to
the job size distribution beyond its mean,
provided that servers
apply the processor-sharing policy.
This insensitivity property,
which contributed to
the success of the Erlang-B formula
for dimensioning circuit-switched networks,
guarantees that
the long-run performance metrics
are not impacted by fine-grained traffic characteristics.
In this paper, we use the product-form queueing model
to better understand the impact
of parameters on performance.

\paragraph*{Related work on insensitive load balancing}

The works \cite{AW12,BJP04,JM10}
analyze the performance of
variants and generalizations
of this load-balancing algorithm
in heterogeneous service systems
where jobs have constraints
that restrict their assignment to resources.
However, the objective of these works
is to develop methods
to \emph{calculate} performance metrics,
which is a different goal
than \emph{understanding}
the impact of parameters on performance.
Some of the formulas derived in these works
are used in the numerical-evaluation section
to assess our analytical results.
The objectives of
the related work~\cite{JP18}
are closer to our work.
This work analyzes the performance
of the same algorithm,
but it focuses on the many-server and heavy-traffic regimes and assumes, for the most part, that servers are homogeneous.
The product-form stationary distribution of load-balancing models has also been studied in~\cite{BBL17,C19-1,C19-2,GR20} for systems with multiple dispatcher or arbitrary server-job compatibilities (also see references therein). 

\paragraph*{Additional related work on load balancing}

If all servers have unit-length buffers,
our algorithm can be seen
as a loss variant
of join-idle-queue~\cite{LXKGLG11},
whereby a job is rejected
if all servers are busy upon its arrival.
With arbitrary buffer lengths,
our algorithm is related
to idle-one-queue~\cite{GW19} and join-below-threshold~\cite{GBLMW20,HSH19,ZTS18-1,ZTS18-2,ZWTSS17},
two generalizations of
join-idle-queue introduced
to improve performance
in the heavy-traffic regime
or when servers have unequal speeds.
The idea is that
servers notify the dispatcher
when the number of jobs in their buffer
falls below a threshold,
so that the dispatcher
assigns incoming jobs
to lightly-loaded servers
if possible.
The threshold of a server,
equal to one under join-idle-queue,
offers a trade-off between
performance improvement and communication overhead;
in case of unequal server speeds,
it can also be used
to favor faster servers.
In our algorithm,
these thresholds correspond to
the buffer lengths,
and the trade-off between
performance and communication overhead
is materialized by
the overall buffer length
(that is, the sum of the lengths
of the buffers at all servers).
Assuming that
this overall buffer length is fixed,
we would like to understand
how to optimally choose
the buffer length of each server
depending on the job arrival rate
and the server speeds.

\paragraph*{Contributions}

Our contributions can be summarized as follows.
We first show the following results
for a cluster of two servers
in which the overall buffer size
across the two servers is fixed.
When the arrival rate is low,
the optimal buffer lengths
(to minimize the loss probability)
are proportional to the server speeds,
meaning that the
buffer of a server is longer
if this server is faster.
On the contrary, when the arrival rate is large,
the optimal buffer lengths are uniform,
and the server speeds
only intervene to break ties
if the overall buffer length is odd.
We also show that,
between these two limiting regimes,
the optimal buffer
lengths evolve monotonically with
the arrival rate.
Besides the practical implications
of these results in terms of system design,
the analytical methods that we develop,
based on an analogy with
weighted paths in the two-dimensional lattice,
are of independent interest.
Afterwards,
we explain how these results
extend to clusters of more than two servers.
We finally turn to numerical evaluations
to assess the validity of these results
in practice
and understand
the impact of parameters
on the mean response time.

\paragraph*{Organization of the paper}

The remainder of the paper
is organized as follows.
Section~\ref{sec:model}
introduces the cluster model
and the equivalent closed queueing model.
In Section~\ref{sec:performance},
we use this model
to derive closed-form expressions
for several performance metrics,
such as the loss probability.
Sections~\ref{sec:underloaded}
to~\ref{sec:monotonicity}
contain our main contributions
for clusters of two servers.
In Section~\ref{sec:underloaded},
we prove that the optimal buffer lengths
in terms of the loss probability
are proportional to the server speeds
when the arrival rate tends to zero,
while, in Section~\ref{sec:overloaded},
we prove that
the optimal buffer lengths are uniform
when the arrival rate tends to infinity.
Section~\ref{sec:monotonicity}
fills the gap between
these two limiting regimes
by showing a monotonicity result.
These results are generalized
to clusters of more than two servers
in Section~\ref{sec:generalization}.
Section~\ref{sec:num}
gives numerical evaluations
and Section~\ref{sec:ccl} concludes the paper.

\section{Heterogeneous server cluster} \label{sec:model}

We consider a cluster that consists of
a single dispatcher and two servers.
Incoming jobs arrive at the dispatcher
according to a Poisson process
with rate $\lambda$.
Each server has a finite buffer,
of length $\ell_1$ for server~1
and $\ell_2$ for server~2,
that contains all jobs assigned to this server
(either waiting or in service),
with the total buffer size
being denoted by $L = \ell_1 + \ell_2$.
An incoming job is permanently lost
if the buffers of both servers
are full upon its arrival,
otherwise the dispatcher immediately
assigns the job to one of the servers,
as will be elaborated further.
Two equivalent state descriptors are
the vector $n = (n_1, n_2)$ that counts
the jobs in the buffer of each server,
and the vector $x = (x_1, x_2)$ that counts
the available slots in the buffers of each server.
The generalization of this model
to clusters of more than two servers
will be considered in
Section~\ref{sec:generalization}.

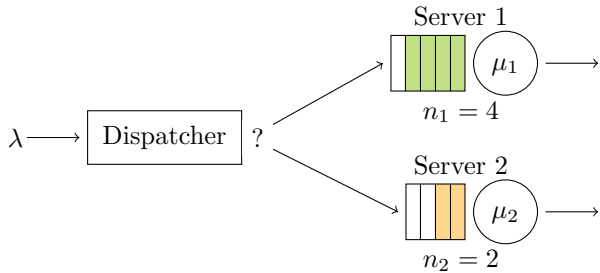
\begin{figure}[ht]
	\centering
	\begin{tikzpicture}
		\node[fcfs=5,
		rectangle split part fill
		={white, green!60}
		] (queue1) {};
		\node[server, anchor=west] (server1)
		at ($(queue1.east)+(.1cm,0)$) {};
		\node at (server1) {\strut$\mu_1$};
		\node[anchor=south]
		at ($(server1.north east)-(.9cm,0)+(0,.08cm)$)
		{Server~1};
		\node[anchor=north]
		at ($(server1.south east)-(.9cm,0)-(0,.08cm)$)
		{$n_1 = 4$};
		
		\node[fcfs=4,
		rectangle split part fill
		={white, white, orange!60},
		anchor=east
		] (queue2) at ($(queue1.south east)-(0,1.6cm)$) {};
		\node[server, anchor=west] (server2)
		at ($(queue2.east)+(.1cm,0)$) {};
		\node at (server2) {\strut$\mu_2$};
		\node[anchor=south]
		at ($(server2.north east)-(.9cm,0)+(0,.08cm)$)
		{Server~2};
		\node[anchor=north]
		at ($(server2.south east)-(.9cm,0)-(0,.08cm)$)
		{$n_2 = 2$};
		
		\node[draw, inner sep=.2cm] (dispatcher)
		at ($(queue1.east)!.5!(queue2.east)-(4cm,0)$)
		{Dispatcher};
		
		\draw[->]
		($(dispatcher.west)-(.8cm,0)$)
		-- node[near start, anchor=east,
		xshift=-.1cm] {$\lambda$}
		($(dispatcher.west)-(.1cm,0)$);
		\draw[->]
		($(dispatcher.east)+(.1cm,0)$)
		-- ($(queue1.west)-(.1cm,0)$);
		\draw[->]
		($(dispatcher.east)+(.1cm,0)$)
		-- ($(queue2.west)-(.1cm,0)$);
		\draw[->]
		($(server1.east)+(.1cm,0)$)
		-- ($(server1.east)+(.8cm,0)$);
		\draw[->]
		($(server2.east)+(.1cm,0)$)
		-- ($(server2.east)+(.8cm,0)$);
		
		\node[anchor=west, fill=white] (question)
		at ($(dispatcher.east)$) {?};
	\end{tikzpicture}
	\caption{A heterogeneous cluster
		with two servers.}
	\label{fig:cluster}
\end{figure}

\subsection{Load balancing and scheduling} \label{subsec:cluster-load-balancing}

The dispatcher applies
the following randomized
load-balancing algorithm,
considered in \cite{BJP04,C19-1,C19-2,JP18}.
When a new job arrives,
the dispatcher chooses
a server at random,
with a probability
proportional to the number
of available slots in the buffer of the server,
and assigns the job to this server.
In \figurename~\ref{fig:cluster}
for instance,
there are one and two
available slots in the buffers
of servers~1 and~2, respectively,
so that an incoming job would be assigned
to server~1 with probability~$\frac13$
and to server~2 with probability~$\frac23$.
In general, if there are
$x_1$ and $x_2$ available slots
in the buffers of servers~1 and~2,
respectively,
then server~1 is chosen
with probability
$\frac{x_1}{x_1 + x_2}$
and server~2 with probability
$\frac{x_2}{x_1 + x_2}$
if $x_1 + x_2 \ge 1$,
otherwise the job is lost.
This algorithm assumes that the dispatcher
always knows the number
of available slots
in the buffer of each server;
this happens, for instance, if
jobs go through the dispatcher
again when they leave the system,
or if servers notify the dispatcher
upon service completions.
This load-balancing algorithm
was considered
in \cite{BJP04,C19-1,C19-2,JP18}
for clusters with an arbitrary number of servers.

We assume that each server
processes the jobs in its buffer
according to a non-anticipating
work-conserving scheduling algorithm,
such as processor-sharing
or first-come-first-served.
The service rates of the servers,
assumed to be constant for simplicity,
are denoted by $\mu_1$ and $\mu_2$.
We assume without loss of generality
that $1 > \mu_1 > \mu_2 > 0$
and $\mu_1 + \mu_2 = 1$,
so that server~1 is the fastest.
The job service requirements are independent
and exponentially distributed with unit mean,
so that the remaining service time
of a job is exponentially distributed
with mean $\frac1\mu$
if this job is currently
served at rate~$\mu$.
This memoryless assumption
is actually not required
to perform the subsequent analysis
if each server applies processor-sharing
or preemptive-resume last-come-first-served.
Each job leaves the system immediately
upon service completion.

Even if the dispatcher does not take
the service rates
$\mu_1$ and $\mu_2$
into account when making
the assignment decision,
we will see later that
varying the buffer lengths
$\ell_1$ and $\ell_2$
allows us to optimize the load distribution
with respect to the service rates.

\subsection{Queueing model}
\label{subsec:cluster-queueing}

The dynamics can be described by
a closed Jackson network \cite{S99}
of three stations
with two customer (or token) classes
\cite{BBL17,C19-1}.
In Section~\ref{sec:performance},
we will use this observation
to derive closed-form expressions
for several performance metrics.

\begin{figure}[ht]
	\centering
	\begin{tikzpicture}
		\node[fcfs=5,
		rectangle split part fill
		={white, green!60}
		] (queue1) {};
		\node[server, anchor=west] (server1)
		at ($(queue1.east)+(.1cm,0)$) {};
		\node at (server1) {\strut$\mu_1$};
		\node[anchor=south]
		at ($(server1.north east)-(.9cm,0)+(0,.08cm)$)
		{Server-1 station};
		\node[anchor=north]
		at ($(server1.south east)-(.9cm,0)-(0,.08cm)$)
		{$n_1 = 4$};
		
		\node[fcfs=4,
		rectangle split part fill
		={white, white, orange!60},
		anchor=east
		] (queue2)
		at ($(queue1.south east)-(0,1.5cm)$) {};
		\node[server, anchor=west] (server2)
		at ($(queue2.east)+(.1cm,0)$) {};
		\node at (server2) {\strut$\mu_2$};
		\node[anchor=south]
		at ($(server2.north east)-(.9cm,0)+(0,.08cm)$)
		{Server-2 station};
		\node[anchor=north]
		at ($(server2.south east)-(.9cm,0)-(0,.08cm)$)
		{$n_2 = 2$};
		
		\node[fcfs=9,
		rectangle split part fill
		={orange!60, green!60, orange!60, white},
		anchor=east
		] (dispatcher)
		at ($(queue1.south east)!.5!(queue2.north east)
		-(4cm,0)$) {};
		\node[server, anchor=east] (lambda)
		at ($(dispatcher.west)-(.1cm,0)$) {};
		\node at (lambda) {\strut$\lambda$};
		\node[anchor=south]
		at (dispatcher.north)
		{Dispatcher station};
		\node[anchor=north]
		at (dispatcher.south)
		{$(x_1, x_2) = (1,2)$};
		
		\draw[<->]
		($(dispatcher.east)+(.1cm,0)+(0,.1cm)$)
		-- node[midway, fill=white, inner sep=.08cm]
		{Class~1} ($(queue1.west)-(.1cm,0)$);
		\draw[<->]
		($(dispatcher.east)+(.1cm,0)-(0,.1cm)$)
		-- node[pos=.46, fill=white, inner sep=.08cm]
		{Class~2} ($(queue2.west)-(.1cm,0)$);
	\end{tikzpicture}
	\caption{Jackson network
		associated with the cluster
		of \figurename~\ref{fig:cluster}.}
	\label{fig:jackson}
\end{figure}
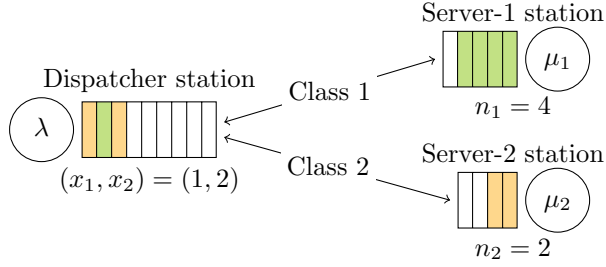

Instead of counting how many jobs are present in the cluster, we keep track to how tokens evolve in a network, where every token corresponds to a specific slot in a server's buffer. A token corresponding to a slot in the buffer of server~$i$ is said to be of class~$i$, for $i \in \{1,2\}$. When a slot in the buffer of server~$i$ is not occupied, the corresponding class-$i$ token is located at the \emph{dispatcher station}; when this slot becomes occupied by a job, the class-$i$ token moves to the \emph{server-$i$ station}.
The routing mechanism of tokens
is deterministic, and each class-$i$ token moves only between the dispatcher station and the server-$i$ station.
More specifically, whenever a job completes service at server~$i$ (this happens with rate $\mu_i$), the corresponding slot becomes available and the corresponding token moves to the dispatcher station. Whenever a job is assigned to server~$i$, a slot in the buffer of this server becomes occupied by this job, so that the corresponding class-$i$ token moves to server-$i$ station.

The corresponding closed Jackson network
consists of the dispatcher station
and the server-1 and 2 stations.
There are $\ell_i$ tokens of class~$i$,
for $i \in \{1,2\}$,
and these tokens are either
in the dispatcher station
or in the server-$i$ station.
The service policy
in the dispatcher station is processor-sharing,
while the service policies
in the server stations
match those applied by the servers.
The state of this network
is described by
the vector $x = (x_1, x_2)$
that counts the number of tokens
in the dispatcher station,
corresponding to available slots
in the servers' buffers. 
In particular, $x = 0$
means that both buffers are full,
while $x = \ell$,
with $\ell = (\ell_1, \ell_2)$,
means that the cluster is empty.
We let $e_1 = (1,0)$
and $e_2 = (0,1)$ denote
the states corresponding to
a single (slot occupied by a)
job at server~$i$.
Known results on closed Jackson
networks~\cite[Chapter~1 and Section~3.1]{S99}
show that the stationary distribution
of the Markov process defined
by the evolution of the network state is given by
\begin{align} \label{eq:stationary}
	\pi(x)
	= \frac1{G(\ell)}
	\binom{x_1 + x_2}{x_1}
	\left( \frac{\mu_1}\lambda \right)^{x_1}
	\left( \frac{\mu_2}\lambda \right)^{x_2},
	\quad x \le \ell,
\end{align}
where $G(\ell)$ is a normalization constant.

\section{Performance analysis} \label{sec:performance}

As in Jackson networks
with a single class of customers~\cite{B73},
the closed-form expressions
of several metrics of interest
stem directly from the normalization constant.

\subsection{Normalization constant} \label{subsec:performance-normalization}

The normalization constant follows
from the normalization equation
$\sum_{x \le \ell} \pi(x) = 1$.
Using~\eqref{eq:stationary}, we obtain
\begin{align} \label{eq:normalization}
	G(\ell)
	= \sum_{x \le \ell}
	\binom{x_1 + x_2}{x_1}
	\left( \frac{\mu_1}\lambda \right)^{x_1}
	\left( \frac{\mu_2}\lambda \right)^{x_2},
	\quad \ell \in \N^2.
\end{align}
The following geometric interpretation
guides the proofs of Theorems~\ref{theo:underloaded}
and~\ref{theo:overloaded}
in Sections~\ref{sec:underloaded}
and~\ref{sec:overloaded}.
Consider the square lattice
consisting of the 2-dimensional
vectors $x = (x_1, x_2)$
with integer components.
We are interested in the direct paths
going from the origin~0 to some vector~$x \le \ell$,
where by direct we mean
a path that consists only
of increasing unit steps
in the horizontal or vertical direction.
If each horizontal step has weight
$\frac{\mu_1}\lambda$
and each vertical step weight
$\frac{\mu_2}\lambda$,
then $(\frac{\mu_1}\lambda)^{x_1}
(\frac{\mu_2}\lambda)^{x_2}$
is the multiplicative weight
of any direct path going
from the origin~0 to the vector~$x$.
Since there are $\binom{x_1 + x_2}{x_1}$
such paths,
it follows that
the normalization constant $G(\ell)$
is the sum of the weights
of all direct paths
going from the origin
to a vector $x \le \ell$.
Intuitively, this suggests that:
\begin{itemize}
	\item If $\lambda \ll \mu_1 + \mu_2 (= 1)$,
	the paths that have the most weight
	are the longest,
	meaning that the states
	with large values of $x_1 + x_2$
	(that is, with many available slots)
	are the most likely.
	This intuition guides the proof
	of Theorem~\ref{theo:underloaded}.
	\item If $\lambda \gg 1$,
	the paths that have
	the most weight are the shortest,
	meaning that the states
	with small values of $x_1 + x_2$
	(that is, with few available slots)
	are the most likely.
	This intuition guides the proof
	of Theorem~\ref{theo:overloaded}.
\end{itemize}
Although~\eqref{eq:normalization}
unveils interesting properties
of the constant $G(\ell)$,
this expression is inconvenient
when it comes to \emph{compute} this constant
because it leads to numerical instability.
For the numerical results,
we use instead the recursive expression
\begin{align*}
	G(\ell)
	&= 1
	+ \sum_{\substack{i = 1 \\ \ell_i \ge 1}}^2
	\frac{\mu_i}\lambda G(\ell - e_i),
	\quad \ell \in \N^2 \setminus \{0\},
\end{align*}
with the base case $G(0) = 1$.
This expression,
which is a special case of the formula
derived in \cite[Proposition~2]{BJP04},
can be seen as a generalization
of the Erlang-B formula~\cite{E17}.

It will often
be convenient to consider the quantity
\begin{align} \label{eq:delta}
	\delta G(\ell)
	= G(\ell + e_1 - e_2) - G(\ell),
	\quad \ell \in \N^2: \ell_2 \ge 1,
\end{align}
that gives the variation
of the normalization constant
obtained by replacing
a server-2 slot with a server-1 slot.
By injecting \eqref{eq:normalization}
into this definition
and making simplifications,
we obtain
\begin{align}
	\label{eq:difference-n}
	\delta G(\ell)
	&= \sum_{n = \ell_1 + 1}^{\ell_1 + \ell_2}
	\binom{n}{\ell_1 + 1}
	\left( \frac{\mu_1}\lambda \right)^{\ell_1 + 1}
	\left( \frac{\mu_2}\lambda \right)^{n - \ell_1 - 1}
	- \sum_{n = \ell_2}^{\ell_1 + \ell_2}
	\binom{n}{\ell_2}
	\left( \frac{\mu_1}\lambda \right)^{n - \ell_2}
	\left( \frac{\mu_2}\lambda \right)^{\ell_2}.
\end{align}

\subsection{Long-term performance metrics} \label{subsec:performance-metrics}

We now consider three performance metrics
called the loss probability,
occupation rate,
and mean response time.
The formulas below
are simple extensions of formulas
derived for closed Jackson networks
with a single class of customers~\cite{B73}.
However, to the best of our knowledge,
the results regarding
the occupation rate and mean response time
have never been derived
in the literature
on insensitive load balancing.

\paragraph*{Loss probability}

The loss probability~$\beta(\ell)$
is defined as the probability
that an incoming job is rejected,
which happens when the buffers
of all servers are full upon its arrival.
According to the PASTA property~\cite{W82},
the loss probability is equal to
the stationary probability $\pi(0)$
that the buffers of all servers are full,
so that then by~\eqref{eq:stationary} we obtain
\begin{align} \label{eq:loss}
	\beta(\ell) = \frac1{G(\ell)}.
\end{align}
By~\eqref{eq:normalization},
the loss probability decreases
when the number of slots
in a given buffer increases
(as intuition suggests).

\paragraph*{Occupation rate}

For each $i \in \{1, 2\}$,
the occupation rate of server~$i$
is defined as the fraction of time
that this server is busy.
According to~\eqref{eq:stationary},
this quantity $\rho_i(\ell)$ is given by
\begin{align*}
	\frac1{G(\ell)}
	\sum_{x \le \ell - e_i}
	\binom{x_1 + x_2}{x_1}
	\left( \frac{\mu_1}\lambda \right)^{x_1}
	\left( \frac{\mu_2}\lambda \right)^{x_2}
	= \frac{G(\ell - e_i)}{G(\ell)}.
\end{align*}
We have $\rho_1(\ell) > \rho_2(\ell)$
if and only if
$G(\ell - e_1) > G(\ell - e_2)$,
which by~\eqref{eq:loss}
also means that the loss probability
increases less when we remove a slot
from server~1 than when
we remove a slot from server~2.

\paragraph*{Mean response time}

The response time of a job
is defined as the duration
between its arrival in the cluster
and its departure.
We can show that
the mean numbers of jobs
in the buffers of servers~1 and~2
are given by
\begin{align*}
	\alpha_1(\ell)
	&= \frac
	{\sum_{x_1 = 0}^{\ell_1 - 1} G(x_1, \ell_2)}
	{G(\ell)},
	&
	\alpha_2(\ell)
	&= \frac
	{\sum_{x_2 = 0}^{\ell_2 - 1} G(\ell_1, x_2)}
	{G(\ell)}.
\end{align*}
According to Little's law,
the mean response time of a job is given by
$
\Delta(\ell)
= \frac{\alpha_1(\ell) + \alpha_2(\ell)}
{\lambda (1 - \beta(\ell))}.
$

\subsection{Problem statement} \label{subsec:performance-problem}

We would like to
understand the joint impact of
the arrival rate~$\lambda$,
service rates~$\mu_1$ and~$\mu_2$,
and buffer lengths~$\ell_1$ and~$\ell_2$
on these performance metrics.
Despite the apparent simplicity of
these expressions,
using them to gain intuition
on the impact of parameters on performance
is a well-known difficult problem~\cite{H11}.
The results of Section~\ref{subsec:performance-metrics}
suggest however that the loss probability
is the easiest of these metrics to analyze,
as it is simply the inverse
of the normalization constant.

We observed earlier that,
although the load-balancing algorithm
does not account for the service rates
$\mu_1$ and $\mu_2$ to make the assignment decision,
the buffer lengths $\ell_1$ and $\ell_2$
can be adjusted to optimize performance.
Therefore, Sections~\ref{sec:underloaded}
to \ref{sec:monotonicity},
which contain our main contributions,
will focus more specifically
on the following question:
\begin{displayquote}
	Given the arrival rate~$\lambda$,
	service rates $\mu_1$ and $\mu_2$,
	and overall buffer length
	$L = \ell_1 + \ell_2$,
	which value(s) of $\ell_1$ and $\ell_2$
	minimize(s) the loss probability?
\end{displayquote}
Since $\mu_1 > \mu_2$,
a natural strategy consists of allocating
(almost) all slots to server~1.
However, the more jobs are
in service on server~1,
the smaller the fraction
of the service rate of
this server they receive,
and the longer they stay in the system,
thus preventing other jobs from entering.
This simple observation suggests
that it would be better to maximize the minimum
service rate received by any job
by allocating slots to servers
proportionally to their service rates,
that is, allocate a fraction $\mu_1$
of the slots to server~1
and a fraction $\mu_2$ of the slots
to server~2.
Theorem~\ref{theo:underloaded}
shows that this allocation
is indeed optimal
when the arrival rate is small.
Theorems~\ref{theo:overloaded}
and \ref{theo:monotonicity}
show that, as the arrival rate increases,
the optimal allocation
evolves monotonically towards
a uniform allocation,
in which both servers
have approximately the same number of slots.

Extending these results
to understand the impact
of parameters on the mean response time
is not straightforward.
This impact will be assessed
numerically in Section~\ref{sec:num}.

\section{Low-traffic regime}
\label{sec:underloaded}

Our first main contribution
is a theorem showing that,
when the arrival rate is small,
the loss probability is minimized
by allocating slots to servers
in proportion to their service rate.

\begin{theorem} \label{theo:underloaded}
	There is a $\lambda_* > 0$ such that,
	for each $\lambda \in (0, \lambda_*]$,
	the loss probability is minimized
	when $\ell = (\ell_1, L - \ell_1)$ with
	\begin{align} \label{eq:underloaded}
		\ell_1 =
		\left\lceil
		\mu_1 L - \mu_2
		\right\rceil
		\text{ or }
		\left\lfloor
		\mu_1 L + \mu_1
		\right\rfloor.
	\end{align}
\end{theorem}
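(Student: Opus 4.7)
The plan is to reduce the loss-probability minimization (over choices of $\ell_1$ with $\ell_1 + \ell_2 = L$ fixed) to maximization of $G(\ell_1, L-\ell_1)$ as a function of $\ell_1$, and then to exploit the polynomial structure of $G$ in the variable $z = 1/\lambda$. Writing $g_{\ell_1} := G(\ell_1, L - \ell_1)$, formula (6) expresses $g_{\ell_1}$ as a polynomial in $z$ of degree exactly $L$. Using the path interpretation immediately after (6), only the single vector $x = (\ell_1, L-\ell_1)$ contributes a path of maximal length~$L$, so the leading coefficient in $z^L$ is
\begin{align*}
	a_{\ell_1} := \binom{L}{\ell_1}\, \mu_1^{\ell_1}\, \mu_2^{L - \ell_1},
\end{align*}
and all remaining terms of $g_{\ell_1}$ are $O(z^{L-1})$ as $z \to \infty$ (i.e.\ $\lambda \to 0$). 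Since $\beta = 1/G$, minimizing $\beta$ over $\ell_1 \in \{0,1,\ldots,L\}$ is equivalent to maximizing~$g_{\ell_1}$.

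Next, I would study $\ell_1 \mapsto a_{\ell_1}$ via its likelihood ratio:
\begin{align*}
	\frac{a_{\ell_1 + 1}}{a_{\ell_1}} = \frac{L - \ell_1}{\ell_1 + 1} \cdot \frac{\mu_1}{\mu_2}.
\end{align*}
Using $\mu_1 + \mu_2 = 1$, this ratio is $\ge 1$ iff $\ell_1 \le L\mu_1 - \mu_2$, with equality precisely when $\ell_1 = L\mu_1 - \mu_2 \in \mathbb{Z}$. Consequently $a_{\ell_1}$ is strictly unimodal and its argmax set is: $\{\lceil L\mu_1 - \mu_2 \rceil\}$ when $L\mu_1 - \mu_2 \notin \mathbb{Z}$, and $\{L\mu_1 - \mu_2,\ L\mu_1 - \mu_2 + 1\}$ when $L\mu_1 - \mu_2 \in \mathbb{Z}$. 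Since $L\mu_1 - \mu_2 + 1 = L\mu_1 + \mu_1$, a short case check (on the fractional part of $L\mu_1 - \mu_2$) shows that this argmax set coincides with the candidate set $\{\lceil \mu_1 L - \mu_2 \rceil,\ \lfloor \mu_1 L + \mu_1 \rfloor\}$ appearing in (9).

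Finally, to transfer the conclusion from the leading coefficient to $g_{\ell_1}$ itself, let $M$ be the argmax set of $a_{\ell_1}$ and choose any $\ell_1^* \in M$. For every $\ell_1 \notin M$ we have the strict gap $a_{\ell_1^*} - a_{\ell_1} > 0$, while
\begin{align*}
	\lambda^{L}\, \bigl(g_{\ell_1^*} - g_{\ell_1}\bigr) \;=\; (a_{\ell_1^*} - a_{\ell_1}) + O(\lambda)
	\quad \text{as } \lambda \to 0.
\end{align*}
Because there are only finitely many values of $\ell_1$ to compare, one may take a uniform threshold $\lambda_* > 0$ below which every such difference stays positive; hence for $\lambda \in (0,\lambda_*]$ the maximizer of $g_{\ell_1}$ lies in $M$, yielding the theorem.

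The main obstacle I expect is the borderline case $L\mu_1 - \mu_2 \in \mathbb{Z}$, where the leading coefficient is tied between two values of $\ell_1$ differing by one. Deciding which of the two is the true maximizer of $g_{\ell_1}$ for small $\lambda$ would require analyzing subleading coefficients and might in fact depend on $\lambda$; the theorem sidesteps this difficulty by listing both candidates. The only other technical point is ensuring that all $O(\lambda^{-(L-1)})$ remainders in the polynomials $g_{\ell_1}$ can be dominated uniformly, which is immediate because $\ell_1$ ranges over a finite set.
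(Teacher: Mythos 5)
Your proposal is correct and follows essentially the same route as the paper: both isolate the degree-$L$ leading term $\binom{L}{\ell_1}\mu_1^{\ell_1}\mu_2^{L-\ell_1}\lambda^{-L}$ of $G(\ell)$ as $\lambda\to 0$, bound the remaining terms by $O(\lambda^{-(L-1)})$, and locate the mode of the resulting binomial weights via the ratio $\frac{L-\ell_1}{\ell_1+1}\frac{\mu_1}{\mu_2}$, using finiteness of the set of allocations to get a uniform $\lambda_*$. Your framing as a direct argmax of the leading coefficient versus the paper's pairwise differences $\delta G(\ell)$ is only a cosmetic difference, and your remark about the integer borderline case matches the paper's Remark~1.
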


\begin{proof}
	We first show that,
	as $\lambda \to 0$,
	the monotonicity of $G(\ell)$
	as a function of $\ell$
	is entirely dictated
	by that of the term
	corresponding to $x = \ell$
	in~\eqref{eq:normalization}.
	We will then study this term.
	
	Let $\ell = (\ell_1, \ell_2) \in \N^2$
	with $\ell_1 + \ell_2 = L$.
	First observe that
	\begin{align*}
		\sum_{\substack{
				x \le \ell: x \neq \ell
		}} \binom{x_1 + x_2}{x_1}
		\left( \frac{\mu_1}\lambda \right)^{x_1}
		\left( \frac{\mu_2}\lambda \right)^{x_2}
		&\le
		\frac{
			\left(
			\frac1\lambda
			\right)^L
			- 1
		}{
			\frac1\lambda
			- 1
		}.
	\end{align*}
	By injecting this inequality
	into~\eqref{eq:normalization}, we obtain
	\begin{align} \label{eq:binomial}
		G(\ell)
		&= \binom{L}{\ell_1}
		\left( \frac{\mu_1}\lambda \right)^{\ell_1}
		\left( \frac{\mu_2}\lambda \right)^{\ell_2}
		+ O_{\lambda \to 0} \left(
		\left(
		\frac1\lambda
		\right)^{L-1}
		\right).
	\end{align}
	If $\ell_2 \ge 1$,
	we can apply this result
	to both $\ell$ and $\ell + e_1 - e_2$,
	so that, by~\eqref{eq:delta}, we obtain
	\begin{align*}
		\delta G(\ell)
		&= \left(
		\frac{\ell_2}{\ell_1 + 1}
		\frac{\mu_1}{\mu_2} - 1
		\right)
		\cdot
		\binom{L}{\ell_1}
		\left( \frac{\mu_1}\lambda \right)^{\ell_1}
		\left( \frac{\mu_2}\lambda \right)^{\ell_2}
		+ O_{\lambda \to 0} \left(
		\left( \frac1\lambda \right)^{L-1}
		\right).
	\end{align*}
	As $\lambda$ tends to zero,
	the first term tends to $+\infty$
	like $(\frac1\lambda)^L$,
	while the second term tends to $+\infty$
	at most like $(\frac1\lambda)^{L-1}$.
	Therefore,
	there is $\lambda_* > 0$ such that,
	for each $\lambda \in (0, \lambda_*]$
	and each $\ell \in \N^2$ such that
	$\ell_1 + \ell_2 = L$
	and $\ell_2 \ge 1$,
	$\delta G(\ell)$
	is of the same sign as
	$(\frac{\ell_2}{\ell_1 + 1}
	\frac{\mu_1}{\mu_2} - 1)$,
	provided that this quantity is nonzero.
	
	Now let $\lambda \in (0, \lambda_*]$
	and assume that servers~$1$ and $2$
	have $\ell_1$ and $\ell_2$ slots,
	respectively, with $\ell_1 + \ell_2 = L$.
	According to the above equality,
	replacing a slot of server~$2$ (if any)
	with a slot of server~$1$
	reduces the loss probability
	whenever
	$\frac{\ell_2}{\ell_1 + 1}
	\frac{\mu_1}{\mu_2} < 1$, that is,
	$\ell_1
	< \mu_1 L
	- \mu_2$.
	By taking the symmetrical statement
	and rearranging the terms,
	we obtain that
	replacing a slot of server~$1$ (if any)
	with a slot of server~$2$
	reduces the loss probability
	whenever
	$\ell_1
	> \mu_1 L
	+ \mu_1$.
	Therefore, the slot allocation
	can only be optimal when
	$\mu_1 L - \mu_2
	\le \ell_1 \le
	\mu_1 L + \mu_1$,
	which is equivalent
	to~\eqref{eq:underloaded}.
\end{proof}

\begin{remark}
	The ceil and floor values
	in~\eqref{eq:underloaded}
	are different only in the pathological case
	where $\mu_1 L + \mu_1 = \mu_1 L - \mu_2 + 1$
	is an integer,
	which means that $\mu_1$ and $\mu_2$
	can be written as fractions
	with denominator $L + 1$.
	In this case, the term
	corresponding to $x = \ell$
	is zero in the expression of
	$\delta G(\ell)$,
	and which of
	$\ell_1
	= \left\lceil
	\mu_1 L - \mu_2
	\right\rceil$
	or
	$\ell_1
	= \left\lfloor
	\mu_1 L + \mu_1
	\right\rfloor$
	is optimal depends on the terms corresponding
	to $x \le \ell$ and $x \neq \ell$.
\end{remark}

\section{Heavy-traffic regime}
\label{sec:overloaded}

The following theorem shows that,
when the arrival rate $\lambda$ is large,
the loss probability is minimized
by taking $\ell_1 \simeq \ell_2$.
The arrival rates $\mu_1$ and $\mu_2$ only serve
to allocate the remaining slot
when the overall buffer length $L$ is odd.

\begin{theorem} \label{theo:overloaded}
	There is $\lambda^* > 0$ such that,
	for each $\lambda \in [\lambda^*, \infty)$,
	the loss probability is
	minimized by choosing:
	\begin{itemize}
		\item $\ell_1 = \ell_2 = \frac{L}2$
		if $L$ is even;
		\item $\ell_1 = \frac{L+1}2$
		and $\ell_2 = \frac{L-1}2$
		if $L$ is odd.
	\end{itemize}
\end{theorem}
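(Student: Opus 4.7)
My plan is to analyze the normalization constant $G(\ell)$ level by level, using the lattice-path interpretation of~\eqref{eq:normalization} developed after that equation. Grouping the direct paths from $0$ to some $x\le\ell$ by their length $k=x_1+x_2$,
\begin{align*}
G(\ell) = \sum_{k=0}^{L} \frac{S_k(\ell)}{\lambda^k},
\qquad
S_k(\ell) := \sum_{\substack{x_1+x_2=k\\ x_1\le\ell_1,\, x_2\le\ell_2}} \binom{k}{x_1}\mu_1^{x_1}\mu_2^{x_2}.
\end{align*}
The key observation is that when $k\le\min(\ell_1,\ell_2)$ neither buffer constraint binds, so the binomial theorem together with $\mu_1+\mu_2=1$ yields $S_k(\ell)=1$; otherwise some terms are dropped and $S_k(\ell)<1$. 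Hence, as $\lambda\to\infty$, $G(\ell)$ admits the asymptotic expansion $1+\lambda^{-1}+\cdots+\lambda^{-m}+O(\lambda^{-m-1})$ with $m=\min(\ell_1,\ell_2)$, and allocations are ranked in the heavy-traffic regime by their degree of balance $\min(\ell_1,\ell_2)$.

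First, I would identify the leading-order optimum. Among all $\ell$ with $\ell_1+\ell_2=L$, the quantity $\min(\ell_1,\ell_2)$ is maximized at $\lfloor L/2\rfloor$; when $L=2m$ is even this is uniquely achieved by $(m,m)$, matching the claim. When $L=2m+1$ is odd, the two candidates $(m+1,m)$ and $(m,m+1)$ both attain $\min=m$, so the tie must be broken at the next level $k=m+1$. A direct inspection shows that $(m+1,m)$ drops only the single term $x=(0,m+1)$ of weight $\mu_2^{m+1}$, while $(m,m+1)$ drops the term $x=(m+1,0)$ of weight $\mu_1^{m+1}$. Because $\mu_1>\mu_2$,
\begin{align*}
S_{m+1}(m+1,m) = 1-\mu_2^{m+1} > 1-\mu_1^{m+1} = S_{m+1}(m,m+1),
\end{align*}
so assigning the extra slot to the faster server wins at the first differing level.

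To upgrade these leading-order comparisons into a statement valid uniformly in $\lambda\ge\lambda^*$, I would argue as follows. For each competitor $\ell\neq\ell^{\mathrm{opt}}$, let $k_0(\ell)$ be the smallest index with $S_{k_0}(\ell^{\mathrm{opt}})\neq S_{k_0}(\ell)$; the discussion above guarantees $S_{k_0}(\ell^{\mathrm{opt}})-S_{k_0}(\ell)>0$ in every case (a competitor with smaller $\min$ loses at some $k\le\lfloor L/2\rfloor$ with $S_k(\ell^{\mathrm{opt}})=1$, and for odd $L$ the competitor $(m,m+1)$ is handled by the tie-break above). Multiplying by $\lambda^{k_0(\ell)}$, the difference $G(\ell^{\mathrm{opt}})-G(\ell)$ has a strictly positive limit as $\lambda\to\infty$, so some finite $\lambda^*(\ell)$ makes it positive beyond that threshold. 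Taking $\lambda^*:=\max_{\ell\neq\ell^{\mathrm{opt}}}\lambda^*(\ell)$, a maximum over a finite set, gives the uniform threshold required by the theorem.

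The main obstacle is the odd-$L$ tie-break: up to level $m$ the loss probability is completely insensitive to which server owns the extra slot, and the dependence on the ordering $\mu_1>\mu_2$ only surfaces at level $m+1$. Once the level-$k$ decomposition $G(\ell)=\sum_k S_k(\ell)/\lambda^k$ and the identity $S_k(\ell)=1$ for $k\le\min(\ell_1,\ell_2)$ are recognized, the rest is a routine comparison of the first mismatched level and a uniformization over a finite family.
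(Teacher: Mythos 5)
Your proof is correct, and it takes a genuinely different route from the paper's. The paper works with the difference $\delta G(\ell) = G(\ell+e_1-e_2)-G(\ell)$ via~\eqref{eq:difference-n}, splits the two sums differently in three cases ($\ell_1\ge\ell_2$, $\ell_1\le\ell_2-2$, $\ell_1=\ell_2-1$), and compares decay rates in $1/\lambda$ to determine the sign of $\delta G$ for every \emph{adjacent} pair of allocations; this yields unimodality of $\ell_1\mapsto\beta(\ell_1,L-\ell_1)$ for large $\lambda$, from which the minimizer is read off. You instead decompose $G(\ell)=\sum_k S_k(\ell)\lambda^{-k}$ by path length and exploit the identity $S_k(\ell)=1$ for $k\le\min(\ell_1,\ell_2)$ (which the paper never isolates), reducing the whole comparison to a lexicographic ordering of the coefficient sequences and a direct contest between the claimed optimum and each competitor at the first mismatched level. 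Your tie-break for odd $L$ is particularly clean --- identifying the single dropped lattice point $(0,m+1)$ versus $(m+1,0)$ --- whereas the paper's Case~3 reaches the same conclusion by pairing up terms of the two sums in~\eqref{eq:difference-n}. What the paper's neighbor-swap argument buys is the stronger structural statement that $\beta$ is unimodal in $\ell_1$ in heavy traffic, which dovetails with the monotonicity machinery of Section~\ref{sec:monotonicity}; what your argument buys is brevity, a uniqueness statement for the minimizer essentially for free, and a decomposition that makes the heavy-traffic intuition (``shortest paths dominate'') completely explicit. Your uniformization step (taking $\lambda^*$ as a maximum over the finite set of competitors) is sound and is the same finiteness argument the paper uses implicitly.
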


\begin{proof}
	Let $\ell \in \N^2$ such that
	$\ell_1 + \ell_2 = L$
	and $\ell_1 \le L - 1$.
	Using~\eqref{eq:difference-n},
	we proceed by exhaustion,
	distinguishing several cases
	depending on the values
	of $\ell_1$ and $\ell_2$.
	
	\paragraph*{Case 1 ($\ell_1 \ge \ell_2$)}
	We can split the second sum
	of~\eqref{eq:difference-n}
	into two parts,
	corresponding to
	$n \in \{\ell_1 + 1, \ldots, \ell_1 + \ell_2\}$
	and $n \in \{\ell_2, \ldots, \ell_1\}$,
	respectively.
	We obtain
	\begin{align*}
		\delta G(\ell)
		&= \begin{aligned}[t]
			&\sum_{n = \ell_1 + 1}^{\ell_1 + \ell_2}
			\left[
				\binom{n}{\ell_1 + 1}
				\left( \frac{\mu_1}\lambda \right)
				^{\ell_1 + 1}
				\left( \frac{\mu_2}\lambda \right)
				^{n - \ell_1 - 1}
				- \binom{n}{\ell_2}
				\left( \frac{\mu_1}\lambda \right)
				^{n - \ell_2}
				\left( \frac{\mu_2}\lambda \right)
				^{\ell_2}
			\right] \\
			&- \sum_{n = \ell_2}^{\ell_1}
			\binom{n}{\ell_2}
			\left( \frac{\mu_1}\lambda \right)
			^{n - \ell_2}
			\left( \frac{\mu_2}\lambda \right)
			^{\ell_2}.
		\end{aligned}
	\end{align*}
	As $\lambda$ tends to $+\infty$,
	each term in the first sum
	tends to zero at least as fast
	as $(\frac1\lambda)^{\ell_1 + 1}$,
	while each term in the second sum
	tends to zero
	at most as fast as $(\frac1\lambda)^{\ell_1}$.
	Therefore, when $\lambda$ is sufficiently large,
	we have $\delta G(\ell) < 0$
	whenever $\ell_1 \ge \ell_2$.
	
	\paragraph*{Case 2 ($\ell_1 \le \ell_2 - 2$)}
	
	We can split the first sum
	of~\eqref{eq:difference-n}
	into two parts,
	corresponding to
	$n \in \{\ell_1 + 1, \ldots, \ell_2 - 1\}$
	and to $n \in
	\{\ell_2, \ldots, \ell_1 + \ell_2\}$,
	respectively.
	We obtain
	\begin{align*}
		\delta G(\ell)
		&= \begin{aligned}[t]
			&\sum_{n = \ell_1 + 1}
			^{\ell_2 - 1}
			\binom{n}{\ell_1 + 1}
			\left( \frac{\mu_1}\lambda \right)
			^{\ell_1 + 1}
			\left( \frac{\mu_2}\lambda \right)
			^{n - \ell_1 - 1} \\
			&+ \sum_{n = \ell_2}^{\ell_1 + \ell_2}
			\left[
				\binom{n}{\ell_1 + 1}
				\left( \frac{\mu_1}\lambda \right)
				^{\ell_1 + 1}
				\left( \frac{\mu_2}\lambda \right)
				^{n - \ell_1 - 1}
				- \binom{n}{\ell_2}
				\left( \frac{\mu_1}\lambda \right)
				^{n - \ell_2}
				\left( \frac{\mu_2}\lambda \right)
				^{\ell_2}
			\right].
		\end{aligned}
	\end{align*}
	As $\lambda$ tends to $+\infty$,
	each term in the first sum tends to zero
	at most as fast as $(\frac1\lambda)^{\ell_2 - 1}$,
	while each term in the second sum tends to zero
	at least as fast as $(\frac1\lambda)^{\ell_2}$.
	Therefore, when $\lambda$ is sufficiently large,
	we have $\delta G(\ell) > 0$
	whenever $\ell_1 \le \ell_2 - 2$.
	
	\paragraph*{Case 3 ($\ell_1 = \ell_2 - 1$,
		assuming that $L$ is odd)}
	
	The two sums in~\eqref{eq:difference-n}
	contain the same number of terms,
	and we obtain:
	\begin{align*}
		\delta G(\ell)
		&= \sum_{n = \ell_1 + 1}^{\ell_1 + \ell_2}
		\binom{n}{\ell_1 + 1}
			\left( \frac{\mu_1}\lambda \right)^
			{n - \ell_1 - 1}
			\left( \frac{\mu_2}\lambda \right)^
			{n - \ell_1 - 1}
			\times \left[
			\left( \frac{\mu_1}\lambda \right)
			^{\ell_1 + \ell_2 + 1 - n}
			- \left( \frac{\mu_2}\lambda \right)
			^{\ell_1 + \ell_2 + 1 - n}
			\right].
	\end{align*}
	Since $\mu_1 > \mu_2$, it follows that
	$\delta G(\ell) > 0$.
	
	\paragraph*{Conclusion}
	We now gather the three cases.
	If $L$ is even, the sequence
	$\ell_1 \mapsto \beta(\ell_1, L - \ell_1)$
	is decreasing on $\{0, 1, \ldots, \frac{L}2\}$
	and increasing on $\{\frac{L}2, \ldots, L-1, L\}$.
	Therefore, the loss probability is minimal
	when $\ell_1 = \ell_2 = \frac{L}2$.
	If $L$ is odd, the sequence
	$\ell_1 \mapsto \beta(\ell_1, L - \ell_1)$
	is decreasing on
	$\{0, 1, \ldots, \frac{L+1}2\}$
	and increasing on
	$\{\frac{L+1}2, \ldots, L-1, L\}$.
	Therefore, the loss probability is minimal
	when $\ell_1 = \frac{L+1}2$
	and $\ell_2 = \frac{L-1}2$.
\end{proof}

\section{Monotonicity} \label{sec:monotonicity}

To make the connection between
the results of the last two sections,
we now show that the optimal slot allocation
is monotonic with respect
to the arrival rate $\lambda$.
In the following theorem,
with ``optimal number of slots
allocated to the fastest server'',
we mean the \emph{smallest} number of slots
allocated to the fastest server
that minimizes the loss probability.
This will be discussed again in
Remark~\ref{remark:monotonicity}.

\begin{theorem} \label{theo:monotonicity}
	The optimal number of slots
	allocated to the fastest server,
	in terms of the loss probability,
	is decreasing with the arrival rate $\lambda$.
\end{theorem}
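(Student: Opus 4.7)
The plan is to reduce Theorem~\ref{theo:monotonicity} to a single-crossing property: for each fixed $\ell_1 \in \{0, 1, \ldots, L-1\}$, the function $\lambda \mapsto \delta G(\ell_1, L-\ell_1)$ stays non-positive once it has become so as $\lambda$ increases. Granting this, there is a threshold $\lambda_{\ell_1}^\star \in (0, \infty]$ for each $\ell_1$ such that $\delta G(\ell_1, L-\ell_1) \le 0$ if and only if $\lambda \ge \lambda_{\ell_1}^\star$. Letting $\ell_1^\star(\lambda)$ denote the smallest optimal number of slots on the fastest server, I would then conclude $\ell_1^\star(\lambda) = \min\{\ell_1 : \lambda \ge \lambda_{\ell_1}^\star\}$, which is manifestly non-increasing in $\lambda$. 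Identifying $\ell_1^\star(\lambda)$ with $\min\{\ell_1 : \delta G(\ell_1, L-\ell_1) \le 0\}$ uses a quasi-concavity property of $\ell_1 \mapsto G(\ell_1, L-\ell_1)$, which holds at the two extremes by Theorems~\ref{theo:underloaded} and~\ref{theo:overloaded} and which I would try to establish for all $\lambda$ as a by-product of the derivative computation described next.

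The single-crossing step rests on a derivative identity. Differentiating~\eqref{eq:normalization} term by term gives
\[
\frac{\partial G(\ell)}{\partial \lambda} = -\frac{\alpha(\ell)}{\lambda}\, G(\ell),
\qquad
\alpha(\ell) = \sum_{x \le \ell}(x_1+x_2)\, \pi(x),
\]
where $\alpha(\ell)$ is the mean number of available slots in the cluster in stationarity. Consequently,
\[
\frac{\partial \delta G(\ell_1, L-\ell_1)}{\partial \lambda}
= -\frac{1}{\lambda}\Bigl[\alpha(\ell_1{+}1, \ell_2{-}1)\, G(\ell_1{+}1, \ell_2{-}1) - \alpha(\ell_1, \ell_2)\, G(\ell_1, \ell_2)\Bigr].
\]
At any $\lambda$ where $\delta G(\ell_1, L-\ell_1) = 0$ we have $G(\ell_1+1, \ell_2-1) = G(\ell_1, \ell_2)$, so the sign of this derivative matches the sign of $\alpha(\ell_1, \ell_2) - \alpha(\ell_1+1, \ell_2-1)$. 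If the latter is strictly positive at every zero of $\delta G$, the derivative is strictly negative there, and so $\delta G$ can only cross zero from positive to negative values; the required single-crossing property then follows.

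The hard part will be the strict comparison $\alpha(\ell_1, \ell_2) > \alpha(\ell_1+1, \ell_2-1)$ whenever $G(\ell_1+1, \ell_2-1) = G(\ell_1, \ell_2)$. Intuitively, swapping a slow-server slot for a fast-server slot at a rate where the loss probabilities already coincide should push the system toward fewer idle slots on average, because the shrunken slow buffer becomes the bottleneck and fills more rapidly in expectation. To make this rigorous, I would leverage the lattice-path interpretation from Section~\ref{subsec:performance-normalization}: since
\[
\alpha(\ell)\, G(\ell) = \sum_{x \le \ell}(x_1+x_2)\binom{x_1+x_2}{x_1}\left(\frac{\mu_1}{\lambda}\right)^{x_1}\left(\frac{\mu_2}{\lambda}\right)^{x_2},
\]
the quantity $\alpha G$ also admits a path-weighted interpretation, with each path counted according to its length, and a careful bijection between paths in the two boxes $[0,\ell_1]\times[0,\ell_2]$ and $[0,\ell_1+1]\times[0,\ell_2-1]$ combined with the hypothesis $G(\ell_1+1,\ell_2-1) = G(\ell_1,\ell_2)$ should produce the inequality. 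A possible alternative is to view $\delta G$ as a polynomial in $t = 1/\lambda$ and apply Descartes' rule of signs, showing via a case analysis on $\ell_1$ versus $\ell_2$ similar to that in Theorem~\ref{theo:overloaded} that its coefficient sequence has at most one sign change.
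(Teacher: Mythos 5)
Your overall architecture is sound and in fact coincides with the paper's: the theorem is obtained by combining (i) unimodality of $\ell_1 \mapsto \beta(\lambda, (\ell_1, L-\ell_1))$ for each fixed $\lambda$, so that the smallest minimizer equals $\min\{\ell_1 : \delta G(\lambda, (\ell_1, L-\ell_1)) \le 0\}$, with (ii) a single-crossing property of $\lambda \mapsto \delta G(\lambda, \ell)$ for each fixed $\ell$. These are exactly Propositions~\ref{prop:monotonicity-ell} and~\ref{prop:monotonicity-lambda}, and your derivative identity $\partial G / \partial \lambda = -(\alpha(\ell)/\lambda) G(\ell)$ is correct and gives an appealing reformulation of (ii) as the comparison $\alpha(\ell) > \alpha(\ell + e_1 - e_2)$ at any $\lambda$ where the two normalization constants coincide.

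However, both load-bearing steps are left unproven, so the proposal has genuine gaps. First, the unimodality (i) is only asserted ``at the two extremes'' and hoped for as a by-product of the $\lambda$-derivative computation; but that computation concerns variations in $\lambda$, not in $\ell_1$, and cannot deliver it. The paper proves (i) separately (Appendix~\ref{app:monotonicity-ell}) by comparing the two sums in~\eqref{eq:deltaG} term by term via the ratio recursion $\bar\pi(x) = \frac{x_1+1}{x_2}\frac{\mu_2}{\mu_1}\bar\pi(x+e_1-e_2)$, showing that $\delta G(\lambda,\ell) > 0$ implies $\delta G(\lambda,\ell-e_1+e_2)>0$; some argument of this kind is indispensable, since without it $\min\{\ell_1 : \delta G \le 0\}$ need not be the minimizer. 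Second, the crux of (ii) --- the strict inequality $\alpha(\ell) > \alpha(\ell+e_1-e_2)$ at zeros of $\delta G$, or equivalently the sign pattern of the coefficients of $\delta G$ as a polynomial in $1/\lambda$ --- is exactly where the work lies, and you offer only an intuition and two candidate strategies without completing either. Your second candidate (at most one sign change in the coefficient sequence) is essentially the paper's Lemma~\ref{lem:monotonicity-lambda}; its hard case $\ell_1 \ge \ell_2$ requires showing that the log-ratio sequence $g_n$ of the two competing terms is strictly concave with $g_{\ell_1+\ell_2+1}=0$, which is a genuine computation rather than a routine adaptation of Theorem~\ref{theo:overloaded}. (The paper then concludes not via Descartes' rule but via the differential inequality $f'(\lambda) < -(n^*-1) f(\lambda)$, which forces at most one downward crossing.) Until these two lemmas are supplied, what you have is a correct skeleton rather than a proof.
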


\begin{proof}
	It will be convenient to write
	the loss probability as
	a function $\beta(\lambda, \ell)$ of both
	the arrival rate $\lambda$
	and buffer lengths $\ell = (\ell_1, \ell_2)$.
	The proof relies on
	two monotonicity results
	that are presented
	in the following two propositions.
	
	\begin{proposition} \label{prop:monotonicity-ell}
		Let $\lambda > 0$
		and $\ell \in \N^2$
		such that $\ell_1 \ge 1$ and $\ell_2 \ge 1$.
		If $\beta(\lambda, \ell + e_1 - e_2)
		\le \beta(\lambda, \ell)$,
		then $\beta(\lambda, \ell - x e_ 1 + x e_2)
		< \beta(\lambda, \ell - (x+1) e_1 + (x+1) e_2)$
		for each $x \in \{0, 1, 2, \ldots, \ell_1 - 1\}$.
	\end{proposition}
	
	\begin{proposition} \label{prop:monotonicity-lambda}
		Let $\lambda_* > 0$
		and $\ell \in \N^2$
		such that $\ell_2 \ge 1$.
		If $\beta(\lambda_*, \ell + e_1 - e_2)
		< \beta(\lambda_*, \ell)$,
		then $\beta(\lambda, \ell + e_1 - e_2)
		< \beta(\lambda, \ell)$
		for each $\lambda \in (0, \lambda_*)$.
	\end{proposition}
	Propositions~\ref{prop:monotonicity-ell}
	and~\ref{prop:monotonicity-lambda}
	are proven in
	Appendices~\ref{app:monotonicity-ell}
	and~\ref{app:monotonicity-lambda},
	respectively.
	These propositions
	can be rephrased as follows.
	Assume that,
	for a given arrival rate $\lambda_*$
	and overall buffer length~$L$,
	we know that allocating $\ell_1 + 1$
	slots to the fastest server
	yields better performance
	than allocating $\ell_1$ slots
	to this server.
	Then Proposition~\ref{prop:monotonicity-ell}
	shows that, with the same
	arrival rate $\lambda_*$,
	allocating even fewer slots than $\ell_1$
	to the fastest server is even worse
	in terms of performance.
	Proposition~\ref{prop:monotonicity-lambda}
	shows that allocating $\ell_1+1$
	slots to the fastest server
	remains better than $\ell_1$ slots
	for any arrival rate $\lambda \in (0,\lambda_*)$.
	
	Now consider two arrival rates
	$\lambda_* \in (0, \infty)$
	and $\lambda \in (0, \lambda_*)$.
	Let $\ell^{\lambda_*}$ and $\ell^{\lambda}$
	denote the optimal slot allocations
	under these arrival rates,
	with $L = \ell_1^{\lambda_*} + \ell_2^{\lambda_*}
	= \ell_1^{\lambda} + \ell_2^{\lambda}$.
	Our objective is to prove that
	$\ell_1^{\lambda_*} \le \ell_1^{\lambda}$.
	The optimality of $\ell^{\lambda_*}$
	implies that
	$\beta(\lambda_*, \ell^{\lambda_*})
	< \beta(\lambda_*, \ell^{\lambda_*} - e_1 + e_2)$.
	Therefore, Proposition~\ref{prop:monotonicity-ell}
	gives
	$\beta(\lambda_*,
	\ell^{\lambda_*} - x e_1 + x e_2)
	< \beta(\lambda_*,
	\ell^{\lambda_*} - (x + 1) e_1 + (x + 1) e_2)$
	for each $x \in \{0, 1, 2, \ldots,
	\ell_1^{\lambda_*} - 1\}$.
	By Proposition~\ref{prop:monotonicity-lambda},
	each of these inequalities yields
	$\beta(\lambda,
	\ell^{\lambda_*} - x e_1 + x e_2)
	< \beta(\lambda,
	\ell^{\lambda_*} - (x + 1) e_1 + (x + 1) e_2)$
	for each $x \in \{0, 1, 2, \ldots,
	\ell_1^{\lambda_*} - 1\}$.
	This in turn implies that
	$\ell_1^{\lambda_*} \le \ell_1^{\lambda}$.
\end{proof}

\begin{remark} \label{remark:monotonicity}
	Proposition~\ref{prop:monotonicity-ell}
	shows that,
	among all vectors $\ell \in \N^2$
	such that $L = \ell_1 + \ell_2$,
	at most two can minimize
	the loss probability,
	and these are separated by only one slot.
	Therefore, in Theorem~\ref{theo:monotonicity},
	the ``optimal number of slots
	allocated to the fastest server''
	is defined up to plus or minus one,
	and we systematically choose the smallest
	value for convenience.
\end{remark}

\section{Extension to more than two servers}
\label{sec:generalization}

We now consider a cluster
that consists of
a dispatcher and
a set $\I = \{1, 2, \ldots, N\}$ of servers.
For each $i \in \I$, we let~$\mu_i$
denote the service rate of server~$i$,
$\ell_i$ the length of its buffer,
and $x_i$ the number of available slots
in this buffer.
We assume without loss of generality that
$1 > \mu_1 \ge \mu_2 \ge \ldots \ge \mu_N > 0$
and $\sum_{i \in \I} \mu_i = 1$,
and we let $L = \sum_{i \in \I} \ell_i$
denote the overall buffer length.
All definitions of
Sections~\ref{sec:model}
and \ref{sec:performance}
are generalized in a natural way
to this $N$-server cluster.
In particular,
the load-balancing algorithm is generalized as follows:
an incoming job is assigned to server~$i$ with
probability $\frac{x_i}{\sum_{j \in \I} x_j}$
for each $i \in \I$
if $\sum_{j \in \I} x_j \ge 1$,
otherwise the job is rejected.
The corresponding
closed Jackson network
consists of $N+1$ stations
that correspond to the dispatcher
and the $N$ servers, respectively.
The set of token classes is~$\I$
and there are $\ell_i$ class-$i$ tokens,
for each $i \in \I$.

\paragraph*{Stationary distribution}

The stationary distribution of
the Markov process
defined by the evolution
of the network state
$x = (x_1, x_2, \ldots, x_N)$
over time is given by
\begin{align*}
	\pi(x)
	= \frac1{G(\ell)}
	\binom{x_1 + x_2 + \ldots + x_N}
	{x_1, x_2, \ldots, x_N}
	\prod_{i = 1}^N
	\left( \frac{\mu_i}\lambda \right)^{x_i},
	\quad x \le \ell,
\end{align*}
where
$\binom{x_1 + x_2 + \ldots + x_N}
{x_1, x_2, \ldots, x_N}
= \frac{(x_1 + x_2 + \ldots + x_N)!}
{x_1! x_2! \cdots x_N!}$
is a multinomial coefficient,
and the constant~$G(\ell)$
is obtained by normalization.
For each $i, j \in \I$
and $\ell \in \N^N$
with $\ell_j \ge 1$,
the variation of the normalization constant
obtained by replacing a server-$j$ slot
with a server-$i$ slot is denoted by
$\delta_{j \to i} G(\ell)
= G(\ell + e_i - e_j) - G(\ell)$.
We now explain how to generalize
our results.

\paragraph*{Theorem~\ref{theo:underloaded}}

Following the same approach as
in the proof of Theorem~\ref{theo:underloaded},
we obtain the following
generalization of~\eqref{eq:binomial}:
\begin{align*}
	G(\ell)
	&=
	\binom{L}{\ell_1, \ell_2, \ldots, \ell_N}
	\prod_{i = 1}^N
	\left( \frac{\mu_i}\lambda \right)^{\ell_i}
	+ O_{\lambda \to 0} \left(
	\left( \frac1\lambda \right)^{L-1}
	\right).
\end{align*}
As $\lambda$ tends to zero,
the variations of the second term
become negligible compared to
those of the first.
Therefore,
an optimal slot allocation
is a mode
of the multinomial distribution
with parameters $L$
and $\mu_1, \mu_2, \ldots, \mu_N$.

\paragraph*{Theorem~\ref{theo:overloaded}}

To generalize the proof
of this theorem and the next,
it is helpful to rewrite
the stationary distribution as
\begin{align} \label{eq:stationary-N}
	\pi(x)
	= \frac1{G(\ell)}
	\varphi_1(x_1, x_2)
	\varphi_2(x_1 + x_2, x_{-1,2}),
	\quad x \le \ell,
\end{align}
where
$x_{-1,2} = (x_3, \ldots, x_N)$,
and
\begin{align*}
	\varphi_1(x_1, x_2)
	&= \binom{x_1 + x_2}{x_1}
	\left( \frac{\mu_1}\lambda \right)^{x_1}
	\left( \frac{\mu_2}\lambda \right)^{x_2},
	&
	\varphi_2(x_1 + x_2, x_{-1,2})
	&= \binom{x_1 + x_2 + \ldots + x_N}
	{x_1 + x_2, x_3, \ldots, x_N}
	\prod_{i = 3}^N
	\left( \frac{\mu_i}\lambda \right)^{x_i}.
\end{align*}
The first factor
is equal (up to a multiplicative constant)
to the stationary distribution
obtained in a two-server cluster,
while the second factor
depends on $x_1$ and $x_2$
only via their sum $x_1 + x_2$.
Using this expression,
we can rewrite $G(\ell)$,
and then $\delta_{2 \to 1} G(\ell)$,
as a nested sum
over $x_1 \in \{0, 1, \ldots, \ell_1\}$
and $x_2 \in \{0, 1, \ldots, \ell_2\}$,
where the term corresponding
to $x_1$ and $x_2$ is the product
of $\varphi_1(x_1, x_2)$
and a factor that depends
on $x_1$ and $x_2$ only via their sum.
Upon observing that this factor is
$\Theta_{\lambda \to 0}
((\frac1\lambda)^{\ell_3 + \ldots + \ell_N})$,
we conclude in a similar way
as in the proof of
Theorem~\ref{theo:underloaded}
that, when $\lambda$ is small enough,
we have $\delta_{2 \to 1} G(\ell) < 0$
if $\ell_1 \ge \ell_2$
and  $\delta_{2 \to 1} G(\ell) > 0$
if $\ell_1 \le \ell_2 - 1$.
A similar result holds
for $G_{i \to j}(\ell)$
for each $i, j \in \I$ such that $\ell_j \ge 1$,
so that the loss probability
is again minimized
by choosing the buffer lengths
approximately equal to each other.

\paragraph*{Theorem~\ref{theo:monotonicity}}

The monotonicity result of this theorem
can only be generalized to
the fastest and slowest servers.
More specifically,
we can show that
the optimal buffer length
of the fastest server(s)
decreases with the arrival rate,
while the optimal buffer length
of the slowest server(s)
increases with the arrival rate.
Indeed, all intermediary results
in Appendices~\ref{app:monotonicity-ell}
and \ref{app:monotonicity-lambda}
can be generalized to two arbitrary
servers~$i, j \in \I$
using the same approach as
in the previous paragraph.
The only restriction is that
$\mu_i \ge \mu_j$
(the assumption $\mu_i > \mu_j$ is used
in Case~1 in the proof of
Lemma~\ref{lem:monotonicity-lambda}
in Appendix~\ref{app:monotonicity-lambda},
and one can verify that
it can be alleviated to
$\mu_i \ge \mu_j$).
Therefore, we can only conclude
for a server $i \in \I$ such that
either $\mu_i \ge \mu_j$
for each $j \in \I \setminus \{i\}$
or $\mu_i \le \mu_j$
for each $j \in \I \setminus \{i\}$.

\section{Numerical results} \label{sec:num}

We now proceed to some case studies, in which we numerically evaluate the performance measures from Section~\ref{subsec:performance-metrics} and show how the loss probability and mean response time are dependent on the arrival rate, service times and slot allocation.

\subsection{Cluster of two servers}

We first consider a cluster of two servers.
The overall buffer length is kept constant equal to $L=20$ in the interest of space, but we observed a similar behavior for other values of~$L$. As before, we assume that $\mu_1 + \mu_2 = 1$ and $\mu_1 > \mu_2$.

\paragraph*{Loss probability}

In \figurename~\ref{fig:nb-tokens-vs-loss},
the loss probability is shown
as a function of the number of slots allocated to the fast server, for several values of the arrival rate $\lambda$,
first in a linear plot
and then in a log-linear plot.
As intuition suggests, a larger arrival rate yields a larger loss probability.
Since $\mu_1$ is large,
the loss probability tends to be lower when the first server has more than half of the slots.
The log-linear plot
reveals that, even in
this range, the loss probability
can still be reduced by
several orders of magnitude
by correctly choosing the buffer lengths.

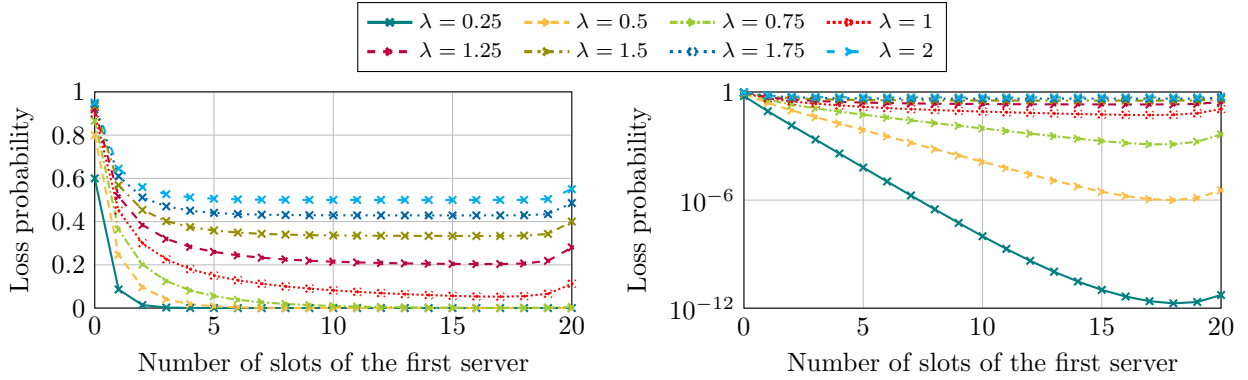
\begin{figure}[ht]
	\centering
	\hskip .8cm
	\begin{tikzpicture}
		\begin{axis}[legendplotstyle]
			\addlegendimage{teal, mark=x}
			\addlegendentry{$\lambda = 0.25$};
			
			\addlegendimage{orange, densely dashed, mark=x}
			\addlegendentry{$\lambda = 0.5$};
			
			\addlegendimage{green, densely dashdotted, mark=x}
			\addlegendentry{$\lambda = 0.75$};
			
			\addlegendimage{red, densely dotted, mark=x,}
			\addlegendentry{$\lambda = 1$};
			
			\addlegendimage{purple, dashed, mark=x}
			\addlegendentry{$\lambda = 1.25$};
			
			\addlegendimage{olive, dashdotted, mark=x}
			\addlegendentry{$\lambda = 1.5$};
			
			\addlegendimage{blue, dotted, mark=x}
			\addlegendentry{$\lambda = 1.75$};
			
			\addlegendimage{cyan, loosely dashed, mark=x}
			\addlegendentry{$\lambda = 2$};
		\end{axis}
	\end{tikzpicture}
	\\
	\raggedleft
	\pgfplotstableread{figure3.csv}\mytable
	\begin{tikzpicture}
		\begin{axis}[defaultplotstyle,
			xlabel={Number of slots of the first server},
			ylabel={Loss probability},
			xmin=0, xmax=20,
			ymin=0, ymax=1,
			width=.48\linewidth,
			]
			
			\addplot+[
			teal, mark=x,
			] table[x={"l1"}, y={"lambda=0.25"}]{\mytable};
			
			\addplot+[
			orange, densely dashed, mark=x,
			] table[x={"l1"}, y={"lambda=0.5"}]{\mytable};
			
			\addplot+[
			green, densely dashdotted, mark=x,
			] table[x={"l1"}, y={"lambda=0.75"}]{\mytable};
			
			\addplot+[
			red, densely dotted, mark=x,
			] table[x={"l1"}, y={"lambda=1."}]{\mytable};
			
			\addplot+[
			purple, dashed, mark=x,
			] table[x={"l1"}, y={"lambda=1.25"}]{\mytable};
			
			\addplot+[
			olive, dashdotted, mark=x,
			] table[x={"l1"}, y={"lambda=1.5"}]{\mytable};
			
			\addplot+[
			blue, dotted, mark=x,
			] table[x={"l1"}, y={"lambda=1.75"}]{\mytable};
			
			\addplot+[
			cyan, loosely dashed, mark=x,
			] table[x={"l1"}, y={"lambda=2."}]{\mytable};
			
		\end{axis}
	\end{tikzpicture}
	\hfill
	\pgfplotstableread{figure3.csv}\mytable
	\begin{tikzpicture}
		\begin{semilogyaxis}[defaultplotstyle,
			xlabel={Number of slots of the first server},
			ylabel={Loss probability},
			yticklabels={$10^{-12}$, $10^{-6}$, $1$},
			xmin=0, xmax=20,
			ymin=0.000000000001, ymax=1,
			width=.48\linewidth,
			]
			
			\addplot+[
			teal, mark=x,
			] table[x={"l1"},
			y={"lambda=0.25"}]{\mytable};
			
			\addplot+[
			orange, densely dashed, mark=x,
			] table[x={"l1"},
			y={"lambda=0.5"}]{\mytable};
			
			\addplot+[
			green, densely dashdotted, mark=x,
			] table[x={"l1"}, 	y={"lambda=0.75"}]{\mytable};
			
			\addplot+[
			red, densely dotted, mark=x,
			] table[x={"l1"}, 	y={"lambda=1."}]{\mytable};
			
			\addplot+[
			purple, dashed, mark=x,
			] table[x={"l1"}, 	y={"lambda=1.25"}]{\mytable};
			
			\addplot+[
			olive, dashdotted, mark=x,
			] table[x={"l1"}, y={"lambda=1.5"}]{\mytable};
			
			\addplot+[
			blue, dotted, mark=x,
			] table[x={"l1"}, y={"lambda=1.75"}]{\mytable};
			
			\addplot+[
			cyan, loosely dashed, mark=x,
			] table[x={"l1"}, y={"lambda=2."}]{\mytable};
		\end{semilogyaxis}
	\end{tikzpicture}
	\caption{Loss probability
		in a two-server cluster,
		as a function of
		the buffer length of the first server
		and for several values of the arrival rate,
		with $L = 20$, $\mu_1=0.9$,
		and $\mu_2=0.1$.}
	\label{fig:nb-tokens-vs-loss}
\end{figure}

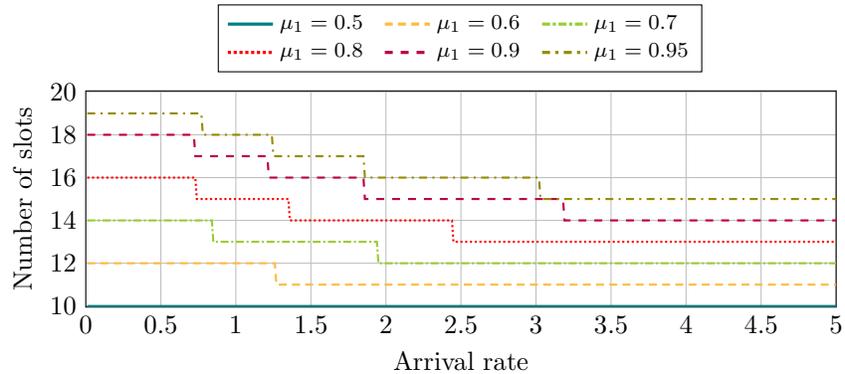
\begin{figure}[ht]
	\centering
	\hskip .85cm
	\begin{tikzpicture}
		\begin{axis}[legendplotstyle,
			legend columns=3]
			\addlegendimage{teal, no markers}
			\addlegendentry{$\mu_1 = 0.5$};
			
			\addlegendimage{orange, densely dashed, no markers}
			\addlegendentry{$\mu_1 = 0.6$};
			
			\addlegendimage{green, densely dashdotted, no markers}
			\addlegendentry{$\mu_1 = 0.7$};
			
			\addlegendimage{red, densely dotted, no markers,}
			\addlegendentry{$\mu_1 = 0.8$};
			
			\addlegendimage{purple, dashed, no markers}
			\addlegendentry{$\mu_1 = 0.9$};
			
			\addlegendimage{olive, dashdotted, no markers}
			\addlegendentry{$\mu_1 = 0.95$};
		\end{axis}
	\end{tikzpicture}
	\\
	\pgfplotstableread{figure4.csv}\mytable
	\begin{tikzpicture}
		\begin{axis}[defaultplotstyle,
			xlabel={Arrival rate},
			ylabel={Number of slots},
			xmin=0, xmax=5,
			ymin=9.95, ymax=20.05,
			]
			
			\addplot+[
			teal, no markers,
			] table[x={"lambda"}, y={"mu1=0.5"}]{\mytable};
			
			\addplot+[
			orange, densely dashed, no markers,
			] table[x={"lambda"}, y={"mu1=0.6"}]{\mytable};
			
			\addplot+[
			green, densely dashdotted, no markers,
			] table[x={"lambda"}, y={"mu1=0.7"}]{\mytable};
			
			\addplot+[
			red, densely dotted, no markers,
			] table[x={"lambda"}, y={"mu1=0.8"}]{\mytable};
			
			\addplot+[
			purple, dashed, no markers,
			] table[x={"lambda"}, y={"mu1=0.9"}]{\mytable};
			
			\addplot+[
			olive, dashdotted, no markers,
			] table[x={"lambda"}, y={"mu1=0.95"}]{\mytable};
			
		\end{axis}
	\end{tikzpicture}
	\caption{Optimal buffer length of
		the fast server
		(to minimize the loss probability)
		in a two-server cluster,
		as a function of the arrival rate and
		for several values of the service rates,
		with $L=20$.}
	\label{fig:lambda-vs-optimal-loss}
\end{figure}

To make more decisive statements, we consider, in \figurename~\ref{fig:lambda-vs-optimal-loss}, the optimal buffer length of the fast server (called the \emph{optimal buffer length} for brevity) as a function of the arrival rate, for several values of the service rates. The optimal buffer length is always at least $\lceil L/2 \rceil$, as it does not make sense to allocate more slots to a slower server, and it increases with $\mu_1$.

When the arrival rate $\lambda$ is small, the optimal buffer length is approximately $\mu_1 L$, which is consistent with Theorem~\ref{theo:underloaded}. Although it may at first \emph{seem} that the larger the difference in server speeds, the later the optimal number of slots changes when increasing $\lambda$, this is a coincidence, as can be seen by comparing the cases $\mu_1=0.9$ and $\mu_1=0.95$. As $\lambda$ increases, the optimal buffer length converges to $L/2 = 10$, as predicted by Theorem~\ref{theo:overloaded}, but the convergence is slower when $\mu_1$ is larger. \figurename~\ref{fig:lambda-vs-optimal-loss} lastly shows that the optimal buffer length decreases as the arrival rate increases, as proven in Theorem~\ref{theo:monotonicity}.

\paragraph*{Mean response time}

We now turn to the mean response time, for which we only show numerical results. The mean response time is shown in \figurename~\ref{fig:arrival-rate-vs-mst}
as a function of the arrival rate $\lambda$, for several values of the buffer length of the first server.
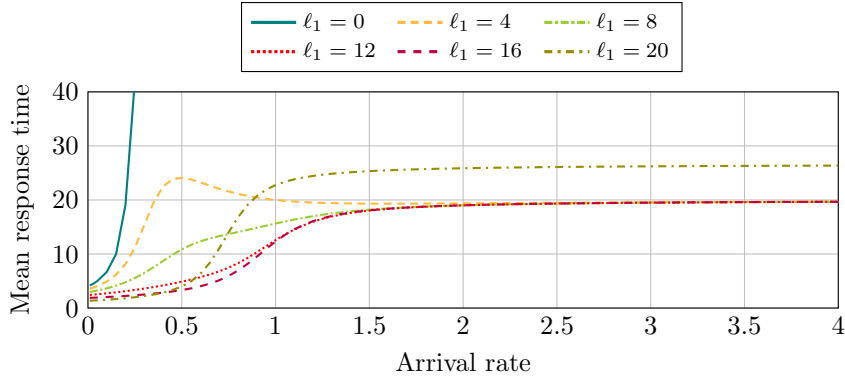
\begin{figure}[t]
	\centering
	\hskip .9cm
	\begin{tikzpicture}
		\begin{axis}[legendplotstyle,
			legend columns=3]
			\addlegendimage{teal, no markers}
			\addlegendentry{$\ell_1 = 0$};
			
			\addlegendimage{orange, densely dashed, no markers}
			\addlegendentry{$\ell_1 = 4$};
			
			\addlegendimage{green, densely dashdotted, no markers}
			\addlegendentry{$\ell_1 = 8$};
			
			\addlegendimage{red, densely dotted, no markers,}
			\addlegendentry{$\ell_1 = 12$};
			
			\addlegendimage{purple, dashed, no markers}
			\addlegendentry{$\ell_1 = 16$};
			
			\addlegendimage{olive, dashdotted, no markers}
			\addlegendentry{$\ell_1 = 20$};=
		\end{axis}
	\end{tikzpicture}
	\\
	\pgfplotstableread{figure5.csv}\mytable
	\begin{tikzpicture}
		\begin{axis}[defaultplotstyle,
			xlabel={Arrival rate},
			ylabel={Mean response time},
			xmin=0, xmax=4,
			ymin=0, ymax=40,
			]
			
			\addplot+[
			teal, no markers,
			] table[x={"lambda"}, y={"l1=0"}]{\mytable};
			
			\addplot+[
			orange, densely dashed, no markers,
			] table[x={"lambda"}, y={"l1=4"}]{\mytable};
			
			\addplot+[
			green, densely dashdotted, no markers,
			] table[x={"lambda"}, y={"l1=8"}]{\mytable};
			
			\addplot+[
			red, densely dotted, no markers,
			] table[x={"lambda"}, y={"l1=12"}]{\mytable};
			
			\addplot+[
			purple, dashed, no markers,
			] table[x={"lambda"}, y={"l1=16"}]{\mytable};
			
			\addplot+[
			olive, dashdotted, no markers,
			] table[x={"lambda"}, y={"l1=20"}]{\mytable};
		\end{axis}
	\end{tikzpicture}
	\caption{Mean response time
		in a two-server cluster,
		as a function of the arrival rate and
		for several values of the buffer length
		of the first server, with
		$L = 20$ and $\mu_1 = 0.75$.
	}
	\label{fig:arrival-rate-vs-mst}
\end{figure}%
We first turn to the paradoxical behavior of the system when the majority of slots is allocated to the slowest server. In this case, the mean response time is \emph{not} necessarily monotonous in the arrival rate $\lambda$, as can be seen with $\ell_1 = 4$.
Indeed, when the arrival rate is low, most jobs are served by the slowest server; as the arrival rate increases, a larger fraction of jobs is sent to the fastest server (even if the buffer of this server is shorter), so that the mean response time decreases.
Note that this scenario is suboptimal anyway, as switching around the slots (i.e.\ giving the majority of slots to the fast server) leads to better performance.
As a side remark, the mean response times always converges to the same value as $\lambda$ increases, as long as at least one slot is allocated to each server.

\figurename~\ref{fig:lambda-vs-optimal-mst}
shows the optimal buffer length of the fast server (to minimize the mean response time) as a function of the arrival rate $\lambda$. The same monotonicity property as for the loss probability seems to hold. Furthermore, when the arrival rate is large, the optimal buffer length also seems converge to $L/2$.
The main difference with the loss probability appears in the low-traffic regime: instead of converging to $\lceil \mu_1 L \rceil$, the optimal buffer length of the fastest server seems to converge to~$L$. 

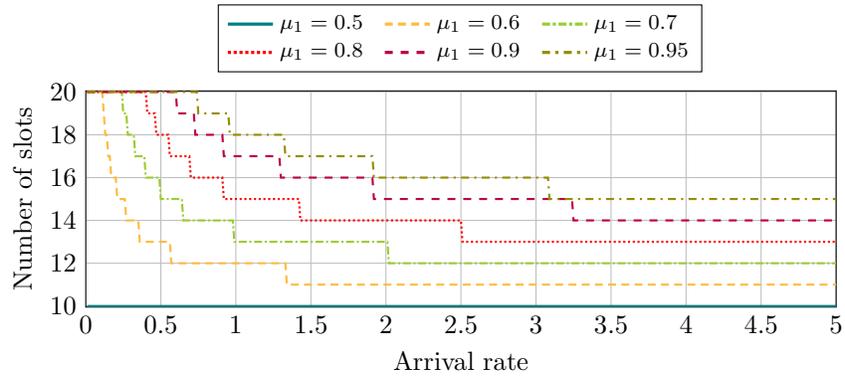
\begin{figure}[t]
	\centering
	\hskip .85cm
	\begin{tikzpicture}
		\begin{axis}[legendplotstyle,
			legend columns=3]
			\addlegendimage{teal, no markers}
			\addlegendentry{$\mu_1 = 0.5$};
			
			\addlegendimage{orange, densely dashed, no markers}
			\addlegendentry{$\mu_1 = 0.6$};
			
			\addlegendimage{green, densely dashdotted, no markers}
			\addlegendentry{$\mu_1 = 0.7$};
			
			\addlegendimage{red, densely dotted, no markers,}
			\addlegendentry{$\mu_1 = 0.8$};
			
			\addlegendimage{purple, dashed, no markers}
			\addlegendentry{$\mu_1 = 0.9$};
			
			\addlegendimage{olive, dashdotted, no markers}
			\addlegendentry{$\mu_1 = 0.95$};
		\end{axis}
	\end{tikzpicture}
	\\
	\pgfplotstableread{figure6.csv}\mytable
	\begin{tikzpicture}
		\begin{axis}[defaultplotstyle,
			xlabel={Arrival rate},
			ylabel={Number of slots},
			xmin=0, xmax=5,
			ymin=9.95, ymax=20.05,
			]
			
			\addplot+[
			teal, no markers,
			] table[x={"lambda"}, y={"mu1=0.5"}]{\mytable};
			
			\addplot+[
			orange, densely dashed, no markers,
			] table[x={"lambda"}, y={"mu1=0.6"}]{\mytable};
			
			\addplot+[
			green, densely dashdotted, no markers,
			] table[x={"lambda"}, y={"mu1=0.7"}]{\mytable};
			
			\addplot+[
			red, densely dotted, no markers,
			] table[x={"lambda"}, y={"mu1=0.8"}]{\mytable};
			
			\addplot+[
			purple, dashed, no markers,
			] table[x={"lambda"}, y={"mu1=0.9"}]{\mytable};
			
			\addplot+[
			olive, dashdotted, no markers,
			] table[x={"lambda"}, y={"mu1=0.95"}]{\mytable};
			
		\end{axis}
	\end{tikzpicture}
	\caption{Optimal buffer length of the fast server
		in a two-server cluster
		(to minimize
		the mean response time),
		as a function of the arrival rate and
		for several values of the service rates,
		with $L=20$.}
	\label{fig:lambda-vs-optimal-mst}
\end{figure}

\subsection{Cluster of four servers}

We finally consider a cluster
with four servers to illustrate
the absence of monotonicity
of the optimal buffer length
in larger clusters.
The server speeds are
$\mu_1 = 0.45$, $\mu_2=0.3$,
$\mu_3=0.2$, and $\mu_4=0.05$
and overall buffer size $L = 40$.
The optimal buffer lengths for each server,
in terms of either the loss probability
or the mean response time,
are shown in
\figurename ~\ref{fig:lambda-vs-optimal-loss-large}
and \ref{fig:lambda-vs-optimal-time-large}
as functions of the arrival rate.

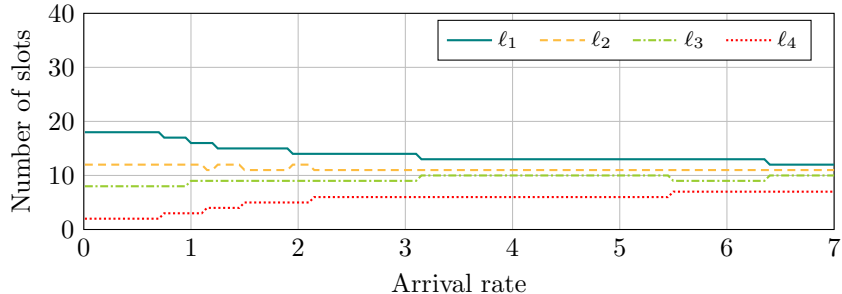
\begin{figure}[t]
	\centering
	\pgfplotstableread{figure7.csv}\mytable
	\begin{tikzpicture}
		\begin{axis}[defaultplotstyle,
			xlabel={Arrival rate},
			ylabel={Number of slots},
			xmin=0, xmax=7,
			ymin=0, ymax=40,
			legend columns=4,
			legend pos=north east,
			xtick={0,1,2,3,4,5,6,7},
			]
			
			\addplot+[
			teal, no markers,
			] table[x={"lambda"}, y={"l1"}]{\mytable};
			\addlegendentry{$\ell_1$};
			
			\addplot+[
			orange, densely dashed, no markers,
			] table[x={"lambda"}, y={"l2"}]{\mytable};
			\addlegendentry{$\ell_2$};
			
			\addplot+[
			green, densely dashdotted, no markers,
			] table[x={"lambda"}, y={"l3"}]{\mytable};
			\addlegendentry{$\ell_3$};
			
			\addplot+[
			red, densely dotted, no markers,
			] table[x={"lambda"}, y={"l4"}]{\mytable};
			\addlegendentry{$\ell_4$};
			
		\end{axis}
	\end{tikzpicture}
	\caption{Optimal buffer length of each server
		(to minimize the loss probability)
		in a four-server cluster,
		as function of $\lambda$,
		with $L = 40$,
		$\mu_1 = 0.45$, $\mu_2=0.3$, $\mu_3=0.2$, $\mu_4=0.05$.}
	\label{fig:lambda-vs-optimal-loss-large}
\end{figure}

As mentioned in Section~\ref{sec:generalization}, all theorems for the loss probability can be generalized to more than two servers: the optimal buffer lengths are proportional to the server speeds when the arrival rate is low and uniform when the arrival rate is large, and the optimal buffer length of the fastest and slowest servers evolve monotonically with the arrival rate. The optimal buffer lengths of servers~2 and~3 are however not monotonic. From extensive numerical experiments conducted for clusters with four to ten servers, we conjecture that, for clusters with $N$ servers, the total optimal buffer size of the $n=1,2,\ldots,N$ fastest servers is decreasing, e.g., with $N = 4$, $\ell_1$, $\ell_1+\ell_2$, and $\ell_1+\ell_2+\ell_3$ are decreasing in $\lambda$.

We observe in \figurename~\ref{fig:lambda-vs-optimal-time-large} that when the arrival rate is low, the optimal buffer size of the fastest server (in terms of mean response time) is $L$, as in the two-server case. For a large arrival rate, the optimal buffer lengths are uniform. A similar sense of monotonicity is observed for the mean response time: the optimal buffer lengths of the fastest and slowest servers are monotonous. The total optimal buffer length of the $n=1,2,\hdots,L$ fastest servers also seems to be monotonous.

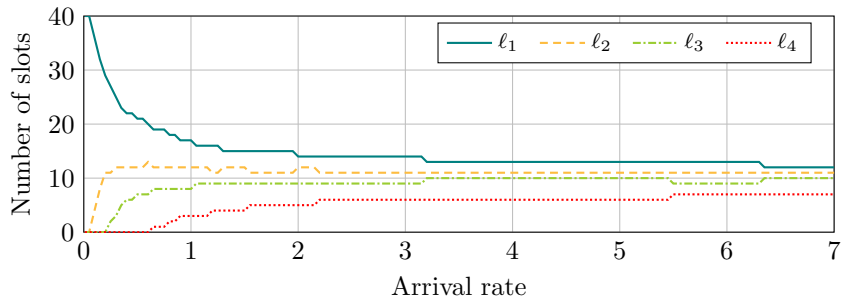
\begin{figure}[t]
	\centering
	\pgfplotstableread{figure8.csv}\mytable
	\begin{tikzpicture}
		\begin{axis}[defaultplotstyle,
			xlabel={Arrival rate},
			ylabel={Number of slots},
			xmin=0, xmax=7,
			ymin=0, ymax=40,
			xtick={0,1,2,3,4,5,6,7},
			legend pos=north east,
			legend columns=4,
			]
			
			\addplot+[
			teal, no markers,
			] table[x={"lambda"}, y={"l1"}]{\mytable};
			\addlegendentry{$\ell_1$};
			
			\addplot+[
			orange, densely dashed, no markers,
			] table[x={"lambda"}, y={"l2"}]{\mytable};
			\addlegendentry{$\ell_2$};
			
			\addplot+[
			green, densely dashdotted, no markers,
			] table[x={"lambda"}, y={"l3"}]{\mytable};
			\addlegendentry{$\ell_3$};
			
			\addplot+[
			red, densely dotted, no markers,
			] table[x={"lambda"}, y={"l4"}]{\mytable};
			\addlegendentry{$\ell_4$};
			
		\end{axis}
	\end{tikzpicture}
	\caption{Optimal buffer length of each server
		(to minimize the mean response time)
		in the same setting as in \figurename~\ref{fig:lambda-vs-optimal-loss-large}.}
	\label{fig:lambda-vs-optimal-time-large}
\end{figure}

\section{Conclusion} \label{sec:ccl}

In this paper, we considered
a load-balancing algorithm
that leads to a product-form
stationary distribution
in clusters of servers
with unequal service speeds.
We developed analytical methods
to understand the joint impact of
the speeds and buffer lengths
of the servers on performance.
For a two-server cluster
with arbitrary service speeds,
we proved analytically that
the optimal slot allocation
in terms of the loss probability
evolves monotonically
from proportional to the server speeds to uniform
as the arrival rate increases.
We then generalized these results
to clusters of more than two servers
and assessed their validity
on numerical examples.

For the future works, we would like to
generalize these analytical results
to the other performance metrics
mentioned in the introduction.
The main difficulty is that their expressions
involve fractions of two sums
that both span all states.
We would also like
to generalize these results
in other directions,
for instance by considering
open variants of the algorithm
or by accounting for assignment constraints~\cite{C19-1,WZS20}.

\section*{Acknowledgments}

The research of Mark van der Boor is supported by
the NWO Gravitation Networks grant 024.002.003.
The authors thank Sem Borst
for his valuable comments
on an earlier draft of the paper,
and in particular for suggesting
the generalization of Theorem~\ref{theo:underloaded}
to clusters with more than two servers.
The authors are also grateful
to Mor Harchol-Balter
for an insightful discussion
on Theorems~\ref{theo:underloaded}
and \ref{theo:overloaded}.


\appendix

\section*{Appendix}

In the following two proofs,
the arrival $\lambda$
will be mentioned explicitly
as a parameter of the quantities involved,
so that in particular we will let
$G(\lambda, \ell)$
denote the normalization constant
under arrival rate~$\lambda$
and buffer length vector~$\ell$,
and $\delta G(\lambda, \ell)
= G(\lambda, \ell + e_1 - e_2) - G(\lambda, \ell)$.
Also, it will be convenient
to rewrite the normalization constant
$G(\lambda, \ell)$ as
\begin{align} \label{eq:betabarpi}
	G(\lambda, \ell)
	= \sum_{x \le \ell}
	\bar\pi(x),
	\quad \ell \in \N^2,
\end{align}
where $\bar\pi$ is a stationary measure
such that $\bar\pi(0) = 1$, that is,
\begin{align} \label{eq:barpi}
	\bar\pi(x) = \binom{x_1 + x_2}{x_1}
	\left( \frac{\mu_1}\lambda \right)^{x_1}
	\left( \frac{\mu_2}\lambda \right)^{x_2},
	\quad x \in \N^2.
\end{align}
Injecting this definition
into the definition of $\delta G(\ell)$ yields
\begin{align} \label{eq:deltaG}
	\delta G(\lambda, \ell)
	&= \sum_{x_2 = 0}^{\ell_2 - 1}
	\bar\pi(\ell_1 + 1, x_2)
	- \sum_{x_1 = 0}^{\ell_1}
	\bar\pi(x_1, \ell_2).
\end{align}
Appendices~\ref{app:monotonicity-ell}
and~\ref{app:monotonicity-lambda}
give the proofs of
Propositions~\ref{prop:monotonicity-ell}
and~\ref{prop:monotonicity-lambda},
respectively.

\section{Proof of Proposition~\ref{prop:monotonicity-ell}}
\label{app:monotonicity-ell}

\theoremstyle{plain}
\newtheorem*{prop:monotonicity-ell}
{Proposition~\ref{prop:monotonicity-ell}}
\begin{prop:monotonicity-ell}
	Let $\lambda > 0$
	and $\ell \in \N^2$
	such that $\ell_1 \ge 1$ and $\ell_2 \ge 1$.
	If $\beta(\lambda, \ell + e_1 - e_2)
	\le \beta(\lambda, \ell)$,
	then $\beta(\lambda, \ell - x e_ 1 + x e_2)
	< \beta(\lambda, \ell - (x+1) e_1 + (x+1) e_2)$
	for each $x \in \{0, 1, 2, \ldots, \ell_1 - 1\}$.
\end{prop:monotonicity-ell}

Consider an arrival rate $\lambda > 0$.
We will prove that,
for each $\ell \in \N^2$
such that $\ell_1 \ge 1$ and $\ell_2 \ge 1$,
$\beta(\lambda, \ell + e_1 - e_2)
\le \beta(\lambda, \ell)$
implies $\beta(\lambda, \ell)
< \beta(\lambda, \ell - e_1 + e_2)$,
that is,
$\delta G(\lambda, \ell) > 0$
implies $\delta G(\lambda, \ell - e_1 + e_2) > 0$.
Proposition~\ref{prop:monotonicity-ell}
then follows from
an induction argument that we omit.
The following lemma,
which follows directly
from~\eqref{eq:barpi},
will be useful.

\begin{lemma} \label{lem:recbarpi}
	For each $x \in \N^2$ such that $x_2 \ge 1$, we have
	\begin{equation*}
		\bar\pi(x)
		= \frac{x_1 + 1}{x_2} \frac{\mu_2}{\mu_1}
		\bar\pi(x + e_1 - e_2).
	\end{equation*}
\end{lemma}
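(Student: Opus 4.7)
The plan is to prove the identity by direct substitution of the definition of $\bar\pi$ given in \eqref{eq:barpi} and then simplify using a standard binomial coefficient identity. Since both sides of the claimed equation are explicit closed-form expressions in $x_1, x_2, \mu_1, \mu_2, \lambda$, the statement reduces to an elementary algebraic identity; no structural or probabilistic argument is needed.

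Concretely, I would first write out $\bar\pi(x+e_1-e_2) = \bar\pi(x_1+1, x_2-1)$ using \eqref{eq:barpi}, noting that the total $x_1 + x_2$ in the binomial coefficient is unchanged by the shift $x \mapsto x + e_1 - e_2$. That is,
\begin{equation*}
\bar\pi(x+e_1-e_2) = \binom{x_1+x_2}{x_1+1} \left(\frac{\mu_1}{\lambda}\right)^{x_1+1} \left(\frac{\mu_2}{\lambda}\right)^{x_2-1}.
\end{equation*}
The powers of $\mu_1/\lambda$ and $\mu_2/\lambda$ in $\bar\pi(x)$ compared with $\bar\pi(x+e_1-e_2)$ differ by a factor of $\mu_2/\mu_1$, which matches the $\mu_2/\mu_1$ appearing in the target identity. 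So the whole claim collapses to the purely combinatorial equality
\begin{equation*}
\binom{x_1+x_2}{x_1} = \frac{x_1+1}{x_2} \binom{x_1+x_2}{x_1+1},
\end{equation*}
valid whenever $x_2 \ge 1$.

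This last equality is an instance of the well-known absorption identity $\binom{n}{k+1} = \binom{n}{k}\tfrac{n-k}{k+1}$ applied with $n = x_1+x_2$ and $k = x_1$, which gives $\binom{x_1+x_2}{x_1+1} = \binom{x_1+x_2}{x_1}\tfrac{x_2}{x_1+1}$; rearranging yields exactly what we need, and the hypothesis $x_2 \ge 1$ ensures no division by zero. Combining the factor $x_1+1$ over $x_2$ from this binomial identity with the factor $\mu_2/\mu_1$ from the power comparison produces the claimed relation.

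There is no real obstacle here: the result is a one-step verification once the definition of $\bar\pi$ is expanded. The only thing to be careful about is the edge case $x_2 = 1$, where $x_2 - 1 = 0$ and the factor $(\mu_2/\lambda)^{x_2-1}$ equals $1$; the identity still holds since the binomial coefficient $\binom{x_1+1}{x_1+1} = 1$ and the right-hand side of Lemma~\ref{lem:recbarpi} reduces to $\tfrac{x_1+1}{1}\tfrac{\mu_2}{\mu_1}\bar\pi(x_1+1,0) = (x_1+1)\tfrac{\mu_2}{\mu_1}(\mu_1/\lambda)^{x_1+1}$, which agrees with $\bar\pi(x_1,1) = (x_1+1)(\mu_1/\lambda)^{x_1}(\mu_2/\lambda)$.
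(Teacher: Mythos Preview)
Your proof is correct and follows exactly the approach the paper indicates: the paper simply states that the lemma ``follows directly from~\eqref{eq:barpi}'', and your expansion of $\bar\pi(x+e_1-e_2)$ together with the absorption identity $\binom{n}{k+1}=\binom{n}{k}\tfrac{n-k}{k+1}$ is precisely the one-line verification this entails. There is nothing to add.
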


Let $\ell \in \N^2$ such that
$\ell_1 \ge 1$, $\ell_2 \ge 1$, and
$\delta G(\lambda, \ell) > 0$.
By applying~\eqref{eq:deltaG}
to $k = \ell - e_1 + e_2$
and removing a (positive) term
from the first sum, we obtain
\begin{align*}
	\delta G(\lambda, k)
	&> \sum_{x_2 = 1}^{\ell_2} \bar\pi(\ell_1, x_2)
	- \sum_{x_1 = 0}^{\ell_1 - 1} \bar\pi(x_1, \ell_2 + 1).
\end{align*}

Applying Lemma~\ref{lem:recbarpi} to each term, using the fact that
$x_2 < \ell_2 + 1$ in the first sum
and $x_1 < \ell_1$ in the second sum,
and making a change of variable yields
\begin{align*}
	\frac{\ell_2 + 1}{\ell_1 + 1}
	\frac{\mu_1}{\mu_2}
	\delta G(\lambda, k)
	> \sum_{y_2 = 0}^{\ell_2 - 1}
	\bar\pi(\ell_1 + 1, y_2)
	- \sum_{y_1 = 1}^{\ell_1}
	\bar\pi(y_1, \ell_2).
\end{align*}
Lastly, we add a (positive) term
in the last sum
and apply~\eqref{eq:deltaG},
so as to obtain
\begin{align*}
	\frac{\ell_2 + 1}{\ell_1 + 1}
	\frac{\mu_1}{\mu_2}
	\delta G(\lambda, k)
	>
	\delta G(\ell).
\end{align*}
Therefore,
$\delta G(\lambda, \ell) > 0$
implies that
$\delta G(\lambda, k) > 0$.

\section{Proof of Proposition~\ref{prop:monotonicity-lambda}}
\label{app:monotonicity-lambda}

\theoremstyle{plain}
\newtheorem*{prop:monotonicity-lambda}
{Proposition~\ref{prop:monotonicity-lambda}}
\begin{prop:monotonicity-lambda}
	Let $\lambda_* > 0$
	and $\ell \in \N^2$
	such that $\ell_2 \ge 1$.
	If $\beta(\lambda_*, \ell + e_1 - e_2)
	< \beta(\lambda_*, \ell)$,
	then $\beta(\lambda, \ell + e_1 - e_2)
	< \beta(\lambda, \ell)$
	for each $\lambda \in (0, \lambda_*)$.
\end{prop:monotonicity-lambda}

\noindent
We will prove equivalently that,
for each $\lambda_* > 0$
and $\ell \in \N^2$
such that $\ell_2 \ge 1$,
$\delta G(\lambda_*, \ell) > 0$
implies $\delta G(\lambda, \ell) > 0$
for each $\lambda \in (0, \lambda_*)$.
The following lemma will be useful.

\begin{lemma} \label{lem:monotonicity-lambda}
	Let $\ell \in \N^2$
	such that $\ell_2 \ge 1$.
	There is a sequence
	$\{c_n\}_{n \in \{\min(\ell_1 + 1, \ell_2),
		\ldots, \ell_1 + \ell_2\}}$
	such that
	\begin{align} \label{eq:difference-c}
		\delta G(\lambda, \ell)
		= \sum_{n = \min(\ell_1 + 1, \ell_2)}
		^{\ell_1 + \ell_2}
		c_n \frac1{\lambda^n},
		\quad \lambda > 0.
	\end{align}
	The sequence
	$\{c_n\}_{n \in \{\min(\ell_1 + 1, \ell_2),
		\ldots, \ell_1 + \ell_2\}}$
	depends on the
	buffer lengths $\ell_1$ and $\ell_2$
	and service rates $\mu_1$ and $\mu_2$
	but not on the arrival rate $\lambda$,
	and satisfies one of
	the following conditions:
	\begin{enumerate}[(1)]
		\item \label{case:positive}
		$c_n > 0$ for each
		$n \in \{\min(\ell_1 + 1, \ell_2),
		\ldots, \ell_1 + \ell_2\}$,
		\item \label{case:negative}
		$c_n < 0$ for each
		$n \in \{\min(\ell_1 + 1, \ell_2),
		\ldots, \ell_1 + \ell_2\}$, or
		\item \label{case:mixed}
		there is
		$n^* \in \{\min(\ell_1 + 1, \ell_2),
		\ldots, \ell_1 + \ell_2\}$ such that
		$c_n < 0$
		for each
		$n \in \{ \min(\ell_1 + 1, \ell_2), \ldots, n^* - 1 \}$,
		$c_{n^*} \ge 0$,
		and $c_n > 0$
		for each
		$n \in \{ n^* + 1, \ldots, \ell_1 + \ell_2 \}$.
	\end{enumerate}
\end{lemma}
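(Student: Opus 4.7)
The plan is to read the coefficients $c_n$ directly off equation~\eqref{eq:difference-n} by collecting all terms carrying the same power of $1/\lambda$. This gives
\begin{align*}
    c_n = \binom{n}{\ell_1+1}\mu_1^{\ell_1+1}\mu_2^{n-\ell_1-1}\mathbf{1}_{\{n\ge\ell_1+1\}} - \binom{n}{\ell_2}\mu_1^{n-\ell_2}\mu_2^{\ell_2}\mathbf{1}_{\{n\ge\ell_2\}},
\end{align*}
which is manifestly $\lambda$-free and vanishes when $n<\min(\ell_1+1,\ell_2)$, so that~\eqref{eq:difference-c} holds. To prove the trichotomy, I would split the summation range into three sub-ranges according to which indicators are active: the ``only-$a$'' sub-range $\{\ell_1+1,\ldots,\ell_2-1\}$ (non-empty iff $\ell_1\le\ell_2-2$), in which $c_n>0$ is immediate; the ``only-$b$'' sub-range $\{\ell_2,\ldots,\ell_1\}$ (non-empty iff $\ell_1\ge\ell_2$), in which $c_n<0$ is immediate; and the ``both'' sub-range $\{\max(\ell_1+1,\ell_2),\ldots,\ell_1+\ell_2\}$, in which the sign of $c_n=a_n-b_n$ requires a more careful analysis.

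The crux is to show that, within the both sub-range, the sign of $c_n$ can change at most once as $n$ grows, and only from negative to positive. Let $R_n=a_n/b_n$, set $r=\mu_1/\mu_2>1$, and substitute $k=\ell_1+\ell_2-n$. The binomial identity $\binom{n}{\ell_1+1}/\binom{n}{\ell_2}=\prod_{j=0}^{k}(\ell_2-j)/(\ell_1+1-j)$ yields
\begin{align*}
    R_n = r^{k+1}\,\prod_{j=0}^{k}\frac{\ell_2-j}{\ell_1+1-j},
\end{align*}
so the sign of $c_n$ coincides with the sign of $r-G_k$, where $G_k$ is the geometric mean of the $k+1$ ratios $t_j:=(\ell_1+1-j)/(\ell_2-j)$ for $j=0,\ldots,k$. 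The main obstacle is to control how $G_k$ depends on $k$; I would tackle this as follows. A direct computation shows that $t_{j+1}-t_j$ has the sign of $\ell_1-\ell_2+1$, so $(t_j)$ is strictly increasing when $\ell_1\ge\ell_2$ and non-increasing (with every $t_j\le 1$) when $\ell_1\le\ell_2-1$. Combined with the elementary fact that appending to a list a term larger than its current geometric mean strictly increases that mean, this yields: $G_k$ is strictly increasing in $k$ when $\ell_1\ge\ell_2$, while $G_k\le 1<r$ for every $k$ when $\ell_1\le\ell_2-1$.

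The lemma then follows by a case split. If $\ell_1\le\ell_2-1$, one has $r>G_k$ throughout the both sub-range, so $c_n>0$ there; together with the only-$a$ sub-range (if any), this yields Case~\ref{case:positive}. If $\ell_1\ge\ell_2$, strict monotonicity of $G_k$ in $k$ means that the threshold condition $r\ge G_k$ is preserved under decreasing $k$, equivalently, under increasing $n$; hence, as soon as $c_n\ge 0$ somewhere in the both sub-range, $c_n>0$ strictly for every larger $n$ in that sub-range. Combined with the only-$b$ sub-range (where $c_n<0$ throughout), this produces either Case~\ref{case:negative}---when $r<G_k$ for all admissible $k$, i.e., $c_n<0$ throughout the both sub-range---or Case~\ref{case:mixed}, with $n^*$ defined as the smallest $n$ in the both sub-range for which $c_n\ge 0$.
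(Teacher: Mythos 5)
Your proof is correct, and the overall skeleton (reading the coefficients $c_n$ off~\eqref{eq:difference-n} by collecting powers of $1/\lambda$, splitting the index range into the ``only-$a$'', ``only-$b$'' and overlap sub-ranges, and then studying the sign of $a_n-b_n$ on the overlap) coincides with the paper's. Where you genuinely diverge is in the crux: the paper proves that the sign of $c_n$ on the overlap changes at most once, and only from negative to positive, by setting $g_n=\log(a_n/b_n)$, observing the anchor value $g_{\ell_1+\ell_2+1}=0$, and invoking strict concavity of the sequence $\{g_n\}$ (second differences strictly negative) --- a computation it then omits for space. You instead reparametrize by $k=\ell_1+\ell_2-n$, factor the ratio as $R_n=(r/G_k)^{k+1}$ with $G_k$ the geometric mean of the ratios $t_j=(\ell_1+1-j)/(\ell_2-j)$, and reduce everything to the elementary facts that $t_{j+1}-t_j$ has the sign of $\ell_1-\ell_2+1$ and that appending a term exceeding the current geometric mean increases it. (I checked the identity $R_n=r^{k+1}\prod_{j=0}^{k}(\ell_2-j)/(\ell_1+1-j)$ and the positivity of the relevant denominators; both hold on the overlap range.) Your route buys two things: it is fully self-contained where the paper defers a nontrivial concavity computation, and it treats the paper's Case~1 ($\ell_1+1\le\ell_2$, where $G_k\le 1<r$ gives Condition~\ref{case:positive}) and Case~2 ($\ell_1\ge\ell_2$, where monotonicity of $G_k$ in $k$ gives Conditions~\ref{case:negative} or~\ref{case:mixed}) by one unified mechanism, the sign of $\ell_1-\ell_2+1$. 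The paper's concavity statement is slightly stronger information about $\{g_n\}$ than you need, but yields the same single-sign-change conclusion. One small presentational point: you should state explicitly that sign$(c_n)=$ sign$(r-G_k)$ uses $b_n>0$ and $k+1\ge 1$, but this is immediate.
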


\begin{figure*}[b]
	\begin{subequations}
		\newcommand{\subequationsformat}{\theparentequation.\arabic{equation}}
		\begin{align}
			\label{eq:cn-1}
			c_n &= \begin{cases}
				\displaystyle
				\binom{n}{\ell_1 + 1}
				{\mu_1}^{\ell_1 + 1}
				{\mu_2}^{n - \ell_1 - 1},
				&n \in \{\ell_1 + 1, \ldots, \ell_2 - 1\}, \\[.3cm]
				\displaystyle
				\binom{n}{\ell_1 + 1}
				{\mu_1}^{\ell_1 + 1}
				{\mu_2}^{n - \ell_1 - 1}
				-
				\binom{n}{\ell_2}
				{\mu_1}^{n - \ell_2}
				{\mu_2}^{\ell_2},
				&n \in \{\ell_2, \ldots, \ell_1 + \ell_2\}.
			\end{cases} \\
			\label{eq:cn-2}
			c_n &= \begin{cases}
				\displaystyle
				- \binom{n}{\ell_2}
				{\mu_1}^{n - \ell_2}
				{\mu_2}^{\ell_2},
				&n \in \{\ell_2, \ldots,
				\ell_1\}, \\[.3cm]
				\displaystyle
				\binom{n}{\ell_1 + 1}
				{\mu_1}^{\ell_1 + 1} {\mu_2}^{n - \ell_1 - 1}
				-
				\binom{n}{\ell_2}
				{\mu_1}^{n - \ell_2}
				{\mu_2}^{\ell_2},
				&n \in \{\ell_1 + 1, \ldots,
				\ell_1 + \ell_2\}.
			\end{cases}
		\end{align}
	\end{subequations}
	\caption{Definitions of the sequence
		$\{c_n\}_{n \in \{\min(\ell_1 + 1, \ell_2),
			\ldots, \ell_1 + \ell_2\}}$
		depending on the value
		of $\ell = (\ell_1, \ell_2)$.}
	\label{fig:cn}
\end{figure*}

\begin{proof}[Proof of the lemma]
	As in the proof
	of Theorem~\ref{theo:overloaded},
	we use~\eqref{eq:difference-n}
	and distinguish two cases depending
	on the values of $\ell_1$ and~$\ell_2$.
	
	\paragraph*{Case 1
		($\ell_1 + 1 \le \ell_2$)}
	
	We can rewrite~\eqref{eq:difference-n}
	as~\eqref{eq:difference-c},
	where the sequence
	$\{c_n\}_{n \in \{\min(\ell_1 + 1, \ell_2),
		\ldots, \ell_1 + \ell_2\}}$
	is given
	by~\eqref{eq:cn-1}
	in \figurename~\ref{fig:cn}.
	It follows immediately that
	$c_n > 0$ for each
	$n \in \{\ell_1 + 1, \ldots, \ell_2 - 1\}$.
	Additionally, we verify that
	$c_n > 0$ for each
	$n \in \{\ell_2, \ldots, \ell_1 + \ell_2\}$
	by calculating the ratio between
	the first and second term of the subtraction
	and showing,
	thanks to the assumption $\mu_1 > \mu_2$,
	that this ratio is
	(strictly) larger than one.
	Therefore, the sequence
	satisfies Condition~\ref{case:positive}.

	\paragraph*{Case 2
		($\ell_1\ge \ell_2$)}
	
	We can rewrite~\eqref{eq:difference-n}
	as~\eqref{eq:difference-c},
	where the sequence
	$\{c_n\}_{n \in \{\min(\ell_1 + 1, \ell_2),
		\ldots, \ell_1 + \ell_2\}}$
	is given
	by~\eqref{eq:cn-2}
	in \figurename~\ref{fig:cn}.
	It follows directly that
	$c_n < 0$ for
	$n \in \{\ell_2, \ldots, \ell_1\}$.
	For
	$n \in \{\ell_1 + 1, \ldots, \ell_1 + \ell_2\}$,
	$c_n$ is of the same sign as $g_n$,
	where
	\begin{align*}
		g_n = \log\left(
		\frac
		{\binom{n}{\ell_1 + 1}
			{\mu_1}^{\ell_1 + 1}
			{\mu_2}^{n - \ell_1 - 1}}
		{\binom{n}{\ell_2}
			{\mu_1}^{n - \ell_2}
			{\mu_2}^{\ell_2}}
		\right).
	\end{align*}
	We conclude that the sequence
	$\{c_n\}_{n \in \{\min(\ell_1 + 1, \ell_2),
		\ldots, \ell_1 + \ell_2\}}$
	satisfies Conditions~\ref{case:negative}
	or~\ref{case:mixed}
	by observing that
	$g_{\ell_1 + \ell_2 + 1} = 0$,
	and then showing that the sequence
	$\{g_n\}_{n \in \{\ell_1 + 1, \ldots,
		\ell_1 + \ell_2 + 1\}}$
	is strictly concave
	in the sense that
	$g_n - g_{n+1} > g_{n+1} - g_{n+2}$
	for each $n \in \{\ell_1 + 1,
	\ldots, \ell_1 + \ell_2 - 1\}$.
	The details of the calculation
	are omitted due to space constraints.
	
	\paragraph*{Conclusion}
	
	The sequence
	$\{c_n\}_{n \in \{\min(\ell_1 + 1, \ell_2),
		\ldots, \ell_1 + \ell_2\}}$
	satisfies Conditions~\ref{case:positive},
	\ref{case:negative}, or \ref{case:mixed}
	of the proposition.
\end{proof}

We now prove
Proposition~\ref{prop:monotonicity-lambda}.
Let $\ell \in \N^2$
such that $\ell_2 \ge 1$.
Using notation from
Lemma~\ref{lem:monotonicity-lambda},
we distinguish two cases.
If the sequence
$\{c_n\}_{n \in \{\min(\ell_1 + 1, \ell_2),
	\ldots, \ell_1 + \ell_2\}}$
satisfies
Conditions \ref{case:positive} or \ref{case:negative},
the result is immediate
because the sign of $\delta G(\lambda, \ell)$
does not depend on $\lambda$.
Now assume that the sequence
$\{c_n\}_{n \in \{\min(\ell_1 + 1, \ell_2),
	\ldots, \ell_1 + \ell_2\}}$
satisfies Condition~\ref{case:mixed}.
If $n^* = \ell_1 + \ell_2$
and $c_{n^*} = 0$,
the conclusion is the same
as under Condition~\ref{case:negative}.
Otherwise, we study the variations
of the function
$f: \lambda \mapsto
\delta G(\lambda, \ell)$
on $(0, +\infty)$.
This function is infinitely differentiable
on this interval and,
for each $\lambda > 0$, we have
\begin{align*}
	f'(\lambda)
	&= - \sum_{n = \min(\ell_1 + 1, \ell_2)}^{\ell_1 +\ell_2}
	n c_n \frac1{\lambda^{n+1}}, \\
	&= - \sum_{n = \min(\ell_1 + 1, \ell_2)}^{n^* - 1}
	n c_n \frac1{\lambda^{n+1}}
	- \sum_{n = n^*}^{\ell_1 + \ell_2}
	(n^* - 1) c_n \frac1{\lambda^{n+1}}
	- \sum_{n = n^*}^{\ell_1 + \ell_2}
	(n - (n^* - 1))
	c_n \frac1{\lambda^{n+1}}, \\
	&\le - (n^* - 1) \left(
	\sum_{n = \min(\ell_1 + 1, \ell_2)}^{\ell_1 + \ell_2}
	c_n \frac1{\lambda^{n+1}}
	\right)
	- \sum_{n = n^*}^{\ell_1 + \ell_2}
	(n - (n^* - 1)) c_n \frac1{\lambda^{n+1}}, \\
	&= - (n^* - 1) f(\lambda)
	- \sum_{n = n^*}^{\ell_1 + \ell_2}
	(n - (n^* - 1)) c_n \frac1{\lambda^{n+1}}, \\
	&<  - (n^* - 1) f(\lambda).
\end{align*}
Hence, if there is $\lambda > 0$
such that $f(\lambda) \le 0$,
then $f'(\lambda) < 0$.
Lemma~\ref{lem:monotonicity-lambda}
also implies
$\lim_{\lambda \to 0^+} f(\lambda)
= +\infty$.
Therefore, either
$f$ is positive on $(0, +\infty)$,
or there is $\lambda_0 > 0$ such that
$f$ is positive on $(0, \lambda_0)$
and negative on $(\lambda_0, +\infty)$,
with $f(\lambda_0) = 0$.
Both cases lead to the conclusion.

\end{document}